\newcolumntype{P}[1]{>{\raggedright\arraybackslash}p{#1}}
\newcommand{\name}{\textsc{MaxRefactor}}
\newcommand{\knorf}{\textsc{Knorf}}
\theoremstyle{definition}
\newtheorem{definition}{Definition}
\newtheorem{theorem}{Theorem}
\newtheorem{lemma}{Lemma}
\newtheorem{proposition}{Proposition}
\newtheorem{observation}{Observation}
\newtheorem{example}{Example}
\title{Scalable Knowledge Refactoring using Constrained Optimisation}
\author {
    Minghao Liu\textsuperscript{\rm 1},
    David M. Cerna\textsuperscript{\rm 2},
    Filipe Gouveia\textsuperscript{\rm 1},
    Andrew Cropper\textsuperscript{\rm 1}
}
\begin{document}

\maketitle

\begin{abstract}
Knowledge refactoring compresses a logic program by introducing new rules.
Current approaches struggle to scale to large programs.
To overcome this limitation, we introduce a constrained optimisation refactoring approach.
Our first key idea is to encode the problem with decision variables based on literals rather than rules.
Our second key idea is to focus on linear invented rules.
Our empirical results on multiple domains show that our approach can refactor programs quicker and with more compression than the previous state-of-the-art approach, sometimes by 60\%.

\end{abstract}

\section{Introduction} \label{sec:intro}

Knowledge refactoring is a key component of human intelligence \cite{Rumelhart1976AccretionTA}.
Knowledge refactoring is also important in AI, such as for policy reuse in planning \cite{DBLP:journals/corr/abs-2403-16824} 
and to improve performance in program synthesis \cite{dreamcoder,dumancic2021knowledge}. 
The goal is to compress a knowledge base, such as a logic program, by introducing new rules.

To illustrate knowledge refactoring, consider the logic program:
\[
\mathcal{P}_1 = \left\{
\begin{array}{l}
     \emph{g(A)} \leftarrow \emph{p(A), q(A,B), r(B), s(A,B)} \\
     \emph{g(A)} \leftarrow \emph{p(A), q(A,B), r(B), t(A,B)} \\
     \emph{g(A)} \leftarrow \emph{p(B), q(B,C), r(C), w(A,B)} \\
    \emph{g(A)} \leftarrow \emph{p(A), q(B,A), r(A), z(A,B)}
\end{array}
\right\}
\]
This program has 4 rules, each with 5 literals. 
Thus, the size of this program is 20 (literals).

We could add a new $\emph{aux}_1$ rule to refactor $\mathcal{P}_1$ as:
\[
\mathcal{P}_2 = \left\{
\begin{array}{l}
     \emph{aux$_1$(A,B) $\leftarrow$ {p}(A), {q}(A,B), {r}(B)} \\
     \emph{g(A) $\leftarrow$ {aux}$_1$(A,B), {s}(A,B)} \\
    \emph{g(A) $\leftarrow$ {aux}$_1$(A,B), {t}(A,B)} \\
    \emph{g(A) $\leftarrow$ {aux}$_1$(B,C), {w}(A,B)} \\
    \emph{g(A) $\leftarrow$ {p}(A), {q}(B,A), {r}(A), {z}(A,B)}
\end{array}
\right\}
\]
The size of $\mathcal{P}_2$ is smaller than $\mathcal{P}_1$ (18 vs 20 literals) yet syntactically equivalent to $\mathcal{P}_1$ after unfolding\footnote{Unfolding means replacing all \emph{aux}$_1$ literals in the body with its definition literals and eliminating \emph{aux}$_1$ from the program. A formal description is in Section~\ref{sec:problem}.} \cite{tamaki1984unfold}.

A limitation of current refactoring approaches is scalability \cite{dreamcoder,babble,stitch}.
For instance, \knorf{} \cite{dumancic2021knowledge} frames the refactoring problem as a constrained optimisation problem (COP) \cite{rossi2006handbook}.
\knorf{} builds a set of invented rules  by enumerating all subsets of the rules in an input program.
\knorf{} then uses a COP solver to find a subset of the invented rules that maximally compresses the input program.
However, because it enumerates all subsets, \knorf{} struggles to scale to programs with large rules and to programs with many rules.

To overcome this scalability limitation, we introduce a novel refactoring approach.
Our first key contribution is a new COP formulation.
Instead of enumerating all subsets of rules, we use decision variables to determine whether a literal is used in an invented rule.
Our new formulation has two key advantages. 
First, the number of decision variables required is exponentially reduced.
Second, an invented rule can use any combination of literals in the input program, rather than a strict subset of an input rule.

To illustrate the benefit of our first contribution, consider the input program $\mathcal{P}_1$.
The invented rule $\emph{aux}_1$ cannot refactor the last rule {as any instance of $\emph{aux}_1$ cannot cover the literals \emph{{p}(A), {q}(B,A), {r}(A)} simultaneously}.  
However, we could invent the rule $\emph{aux}_2$ to refactor the program as:
\[
\mathcal{P}_3 = \left\{
\begin{array}{l}
    \emph{aux$_2$(A,B,C) $\leftarrow$ {p}(A), {q}(B,C), {r}(C)} \\
    \emph{g(A) $\leftarrow$ {aux}$_2$(A,A,B), {s}(A,B)} \\
    \emph{g(A) $\leftarrow$ {aux}$_2$(A,A,B), {t}(A,B)} \\
    \emph{g(A) $\leftarrow$ {aux}$_2$(B,B,C), {w}(A,B)} \\
    \emph{g(A) $\leftarrow$ {aux}$_2$(A,B,A), {z}(A,B)}
\end{array}
    \right\}
\]
The body of $\emph{aux}_2$ is not a subset of any rule in $\mathcal{P}_1$,
so 
\knorf{} could not invent $\emph{aux}_2$.
By contrast, because we allow an invented rule to use any literal, we can invent $\emph{aux}_2$.
The program $\mathcal{P}_3$ is smaller than $\mathcal{P}_2$ (16 vs 18 literals) yet is syntactically equivalent to $\mathcal{P}_1$ after unfolding.
As this example shows, our first contribution allows our approach to find better (smaller) refactorings.

Our second key contribution is to use \emph{linear invented rules} where (i) the body literals may not occur in the input program, and (ii) the size may be larger than any rule in the input program.

To illustrate this idea, consider the program:
\[
\mathcal{Q}_1 = \mathcal{P}_1 \cup \left\{
\begin{array}{l}
    \emph{g(A) $\leftarrow$ {p}(A), {p}(B), {q}(A,B), {r}(B)} \\
    \emph{g(A) $\leftarrow$ {p}(A), {q}(A,B), {q}(B,C), {r}(C)}
\end{array}
    \right\}
\]
Although we can use \emph{aux}$_1$ and \emph{aux}$_2$ to refactor $\mathcal{Q}_1$, we can find a smaller refactoring by introducing a \emph{linear invented rule} \emph{{aux}$_3$} to refactor $\mathcal{Q}_1$ as:
\[
\mathcal{Q}_2 = \left\{
\begin{array}{l}
     \emph{{aux}$_3$(A,B,C,D,E,F,G) $\leftarrow$ {p}(A), {p}(B), {q}(C,D),} \\
     \hspace{99pt} \emph{{q}(E,F), {r}(G)} \\
     \emph{g(A) $\leftarrow$ {aux}$_3$(A,A,A,B,A,B,B), {s}(A,B)} \\
    \emph{g(A) $\leftarrow$ {aux}$_3$(A,A,A,B,A,B,B), {t}(A,B)} \\
    \emph{g(A) $\leftarrow$ {aux}$_3$(B,B,B,C,B,C,C), {w}(A,B)} \\
    \emph{g(A) $\leftarrow$ {aux}$_3$(A,A,B,A,B,A,A), {z}(A,B)} \\
    \emph{g(A) $\leftarrow$ {aux}$_3$(A,B,A,B,A,B,B)} \\
    \emph{g(A) $\leftarrow$ {aux}$_3$(A,A,A,B,B,C,C)}
\end{array}
\right\}
\]
This refactoring reduces the size from 30 $(\mathcal{Q}_1)$ to 22 $(\mathcal{Q}_2)$.

In a linear invented rule, the variables occur linearly in the body literals. 
We prove (Theorem \ref{thm:linear}) that if a program can be refactored using any invented rule then it can be refactored to the same size (or smaller) using a linear invented rule.
This contribution allows our approach to find smaller refactorings.

\subsection*{Novelty and Contributions}
The three main novelties of this paper are (i) the theoretical proof of the complexity of the optimal knowledge refactoring problem, (ii) a concise and efficient encoding of the problem as a COP, and (iii) demonstrating the effectiveness of our approach on large-scale and real-world benchmarks.

Overall, our contributions are:
\begin{itemize}
    \item We introduce the \emph{optimal knowledge refactoring} problem, where the goal is to compress a logic program by inventing rules.
    We prove that the problem is $\mathcal{NP}$-hard.
    \item We introduce \name{}, which solves the optimal knowledge refactoring problem by formulating it as a COP.
    \item We evaluate our approach on multiple benchmarks. 
    Our results show that, compared to the state-of-the-art approach, \name{} can improve the compression rate by 60\%.
    Our results also show that \name{} scales well to large and real-world programs.
\end{itemize}

\section{Related Work}

\textbf{Knowledge refactoring.}
Knowledge refactoring is important in many areas of AI \cite{DBLP:journals/corr/abs-2403-16824}, notably in program synthesis \cite{dreamcoder,stitch,babble,stevie}.
For instance, \citet{dumancic2021knowledge} show that learning from refactored knowledge can substantially improve predictive accuracies of an inductive logic programming system and reduce learning times because the knowledge is structured in a more reusable way and redundant knowledge is removed.

\textbf{Model reformulation.} There is much research on reformulating constraint satisfaction problem (CSP) models automatically \cite{o2010automated,DBLP:conf/ijcai/CharlierKLD17,vo2020reformulations}.
The main categories include implied constraints generation \cite{charnley2006automatic,bessiere2007learning}, symmetry and dominance breaking \cite{DBLP:journals/mp/Liberti12,DBLP:conf/ijcai/MearsB15}, pre-computation \cite{DBLP:journals/ai/CadoliM06}, and cross-modeling language translation \cite{drake2002automatically}.
We differ because we work with logic programs and invent rules.

\textbf{Redundancy and compression.}
Knowledge refactoring is distinct from knowledge redundancy, which is useful in many areas of AI, such as in Boolean Satisfiability \cite{heule2015clause}.
For instance, \citet{plotkin:thesis} introduced a method to remove logically redundant literals and clauses from a logical theory.
Similarly, theory compression \cite{DeRaedtLuc2008CpPp} approaches select a subset of clauses such that performance is minimally affected with respect to a cost function.
We differ from redundancy elimination and theory compression because we restructure knowledge by inventing rules.

\textbf{Predicate invention.}
We refactor a logic program by introducing predicate symbols that do not appear in the input program, which is known as predicate invention \cite{kramer:ijcai20}.
Predicate invention is a major research topic in program synthesis and inductive logic programming \cite{pedro:pi,mugg:metagold,celine:bottom,luc:mdl,tom:pi,cerna2024generalisation}.
We contribute to this topic by developing an efficient and scalable method to invent predicate symbols to compress a logic program.

\textbf{Program refactoring.}
In program synthesis, many researchers \cite{dreamcoder,stitch,babble} refactor functional programs by searching for local changes (new $\lambda$-expressions) that optimise a cost function.
We differ from these approaches because we (i) consider logic programs, (ii) use a declarative solving paradigm (COP), and (iii) guarantee optimal refactoring.
\textsc{Alps} \cite{seb:alps} compresses facts in a logic program, whereas we compress rules.

\textbf{\knorf{}}.
The most similar work is \knorf{} \cite{dumancic2021knowledge}.
Given a logic program $\mathcal{P}$ as input, \knorf{} works as follows.
For each rule $r \in \mathcal{P}$, \knorf{} enumerates every subset $s$ of $r$.
For each subset $s$ and for each combination $h$ of the variables in $s$, \knorf{} creates a new rule $\emph{aux}_{s_h}$.
\knorf{} then creates a COP problem where there is a decision variable for each $\emph{aux}_{s_h}$.
It also creates decision variables to state whether a rule  $r \in \mathcal{P}$ is refactored using $\emph{aux}_{s_h}$.
\knorf{} then uses a COP solver to find a subset of the invented rules that leads to a refactoring with maximal compression.
\knorf{} has many scalability issues.
Foremost, it enumerates all subsets of all rules and thus struggles to scale to programs with large rules and many rules.
Specifically, for a program with $n$ rules and a maximum rule size $k$, \knorf{} uses $O(n2^{k})$ decision variables.
By contrast, our \name{} approach does not enumerate all subsets of rules. 
Instead, as we describe in Section \ref{sec:impl}, we define a new rule $\emph{aux}_i$ by creating decision variables to represent whether any literal in $\mathcal{P}$ is in $\emph{aux}_i$. 
Besides, we create decision variables to state whether a rule and a literal in $\mathcal{P}$ is refactored using $\emph{aux}_i$.
Since the number of invented rules is a predefined constant, \name{} only needs $O(nk)$ decision variables.
Finally, we propose the most general refactoring, which can expand the space for invented rules, enabling \name{} to find better refactorings than \knorf{}.

\section{Problem Setting} 
\label{sec:problem}
We focus on refactoring knowledge in the form of a logic program, specifically a {definite logic program} with the least Herbrand model semantics \cite{lloyd2012foundations}.
For simplicity, we use the term {logic program} to refer to a definite logic program.
We assume familiarity with logic programming but restate some key terms.
A {logic program} is a set of \emph{rules} of the form:
$$h \leftarrow a_1, a_2, \dots, a_m$$
where $h$ is the \emph{head literal} and $a_1, \dots, a_m$ are the \emph{body literals}.
A \emph{literal} is a predicate symbol with one or more variables. 
For example, \emph{parent(A,B)} is a literal with the predicate symbol \emph{parent} and two variables \emph{A} and  $B$.
The predicate symbol of a literal $a$ is denoted as $pred(a)$ and the set of its variables is denoted as $var(a)$.
The head literal of a rule is true if and only if all the body literals are true.
The head literal of a rule $r$ is denoted as $head(r)$ and the set of body literals is denoted as $body(r)$.
The \emph{size} of a rule $r$ is defined as $size(r) = |body(r)|+1$, which is the total number of literals.
The \emph{size} of a logic program $\mathcal{P}$ is defined as $size(\mathcal{P}) = \sum_{r \in \mathcal{P}}{size(r)}$.

\subsection{Knowledge Refactoring}
Our goal is to reduce the size of a logic program whilst preserving its semantics. 
However, checking the semantic equivalence between two logic programs is undecidable \cite{shmueli1987decidability}.
Therefore, we focus on finding syntactically equivalent refactorings.
Syntactic equivalence implies semantic equivalence but the reverse is not necessarily true.

We check syntactic equivalence by \emph{unfolding} \cite{tamaki1984unfold} refactored programs.
Unfolding is a transformation in logic programming.
We adapt the unfolding definition of \citet{ilp:book} to (i) resolve multiple literals, whereas the original definition only resolves one literal each time, and (ii) prohibit variables that only appear in the body of a rule:
\begin{definition}[\textbf{Rule unfolding}]
    Given a logic program $\mathcal{P} = \{c_1, c_2, \dots, c_N\}$ and a rule $r$, where $r$ does not have variables that only appear in its body, then unfolding $\mathcal{P}$ upon $r$ means constructing the logic program $\mathcal{Q} = \{d_1, d_2, \dots, d_N\}$, where each $d_i$ is the resolvent of $c_i$ and $r$ if $head(r)$ is unifiable with any literal in $body(c_i)$, otherwise $d_i=c_i$.
\end{definition}
\begin{example}[\textbf{Rule unfolding}]
\label{example1}
Consider the program:
\[
\mathcal{P} = \left\{
\begin{array}{l}
    \emph{g(A) $\leftarrow$ {p}(A), {aux}(A,B)} \\
    \emph{g(A) $\leftarrow$ {p}(B), {p}(C), {aux}(A,B), {aux}(A,C)} \\
    \emph{g(A) $\leftarrow$ {p}(B), {q}(A,B), {r}(B)}
\end{array}
\right\}
\]
and the rule $r:\emph{aux(A,B) $\leftarrow$ {p}(B), {q}(A,B)}$. The result of unfolding $\mathcal{P}$ upon $r$ is:
\[
\mathcal{Q} = \left\{
\begin{array}{l}
    \emph{g(A) $\leftarrow$ {p}(A), {p}(B), {q}(A,B)} \\
    \emph{g(A) $\leftarrow$ {p}(B), {p}(C), {q}(A,B), {q}(A,C)} \\
    \emph{g(A) $\leftarrow$ {p}(B), {q}(A,B), {r}(B)}
\end{array}
\right\}
\]
\end{example}
\noindent
Given a set of rules $S$, $\textsf{unfold}(\mathcal{P},S)$ is the result of successively unfolding $\mathcal{P}$ upon every rule in $S$.

To refactor a logic program, we want to introduce invented rules:
\begin{definition}[\textbf{Invented rule}]
    \label{def:aux}
    Given a logic program $\mathcal{P}$ and a finite set of variables $\mathcal{X}$, an invented rule $r$ satisfies four conditions:
    \begin{enumerate}
        \item $pred(head(r)) \notin \mathcal{P}$
        \item $\forall a \in body(r),\;pred(a) \in \mathcal{P}$
        \item $\forall a \in body(r),\;var(a) \subseteq \mathcal{X}$
        \item $var(head(r)) = \bigcup_{a \in body(r)}{var(a)}$
    \end{enumerate} 
\end{definition}

\noindent 
In this paper, we use the prefix \emph{aux} to denote the predicate symbol of an invented rule.

We want to refactor an input rule using invented rules.
We call such rules \emph{refactored rules}:
\begin{definition}[\textbf{Refactored rule}]
    Let $r$ be a rule and $S$ be a set of invented rules.
    Then $r'$ is a refactored rule of $r$ iff $\textsf{unfold}(\{r'\},S)=\{r\}$.
    For a logic program $\mathcal{P}$ and a set of invented rules $S$, $\mathcal{I}(\mathcal{P},S)$ denotes the set of all possible refactored rules of rules in $\mathcal{P}$.
\end{definition}

\begin{example}[\textbf{Refactored rule}]
\label{example2}
    Given the rules in $\mathcal{Q}$ and the invented rule \emph{aux} in Example~\ref{example1}, the first two rules in $\mathcal{P}$ are refactored rules.
\end{example}

\noindent
We define a \emph{refactored program}:
\begin{definition}[\textbf{Refactored program}]
    \label{def:rp}
    Given a logic program $\mathcal{P}$ and a finite set of invented rules $S$, a logic program $\mathcal{Q} \subseteq \mathcal{P} \cup S \cup \mathcal{I}(\mathcal{P},S)$ is a refactored program of $\mathcal{P}$ iff $\textsf{unfold}(\mathcal{Q} \setminus S, \mathcal{Q} \cap S) = \mathcal{P}$, and we refer to $\mathcal{Q}$ as a \emph{proper refactoring} if $\forall S' \subset \mathcal{Q} \cap S$, $\textsf{unfold}(\mathcal{Q} \setminus S, \mathcal{Q} \cap S') \neq \mathcal{P}$.
\end{definition}

\noindent 
To help explain the intuition behind Definition~\ref{def:rp}, note the following three properties of a refactored program.
First, a refactored program consists of three kinds of rules (a) input rules from $\mathcal{P}$, 
(b) invented rules from $S$, 
and (c) refactored rules from $\mathcal{I}(\mathcal{P},S)$.
Second, $\mathcal{P}$ and $\mathcal{Q}$ must be syntactically equivalent, which means that we can unfold $\mathcal{Q}$ upon the invented rules to get $\mathcal{P}$.
Third, we are only interested in programs $\mathcal{Q}$ which are proper refactorings of $\mathcal{P}$ as any non-proper refactoring containing $\mathcal{Q}$ is always a larger program.

We want \emph{optimal knowledge refactoring (OKR)}:
\begin{definition}[\textbf{Optimal knowledge refactoring problem}]
    \label{def:oprp}
    Given a logic program $\mathcal{P}$ and a finite set of invented rules $S$, the {optimal knowledge refactoring} problem is to find a refactored program $\mathcal{Q}$ such that for any refactored program $\mathcal{Q'}$, $size(\mathcal{Q}) \le size(\mathcal{Q'})$.
\end{definition}

\noindent We prove that the OKR problem is $\mathcal{NP}$-hard:
\begin{theorem}
\label{thm:complex}
    The optimal knowledge refactoring problem is $\mathcal{NP}$-hard.
\end{theorem}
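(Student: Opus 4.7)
I would prove Theorem~\ref{thm:complex} by a polynomial-time many-one reduction from \textsc{Exact Cover by 3-Sets} (\textsc{X3C}), a canonical NP-complete problem: given a universe $U$ with $|U|=3q$ and a family $\mathcal{F}$ of $3$-element subsets, decide whether $q$ of them partition $U$. A polynomial reduction into the decision version of OKR (``does $(\mathcal{P},S)$ admit a refactoring of size $\leq K$?'') then yields NP-hardness of the optimisation problem in Definition~\ref{def:oprp}.

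\textbf{Construction.} For each $u\in U$ introduce a unary predicate $p_u$. Fix a constant $N\geq 3$ and let $\mathcal{P}=\{r_j : 1\leq j\leq N\}$ where $r_j : g_j(A)\leftarrow\bigwedge_{u\in U}p_u(A)$ uses a fresh head predicate $g_j$, and let the candidate invented rules be $S=\{aux_F(A)\leftarrow\bigwedge_{u\in F}p_u(A)\mid F\in\mathcal{F}\}$. Because every literal uses the single variable $A$, the unfolding condition in Definition~\ref{def:rp} collapses to set equality of body literals: any refactored rule of $r_j$ is specified by a sub-family $\mathcal{C}_j\subseteq\mathcal{Q}\cap S$ together with a leftover set $L_j\subseteq U$ such that $\bigl(\bigcup_{F\in\mathcal{C}_j}F\bigr)\cup L_j=U$, and its size equals $1+|\mathcal{C}_j|+|L_j|$. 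Set the target threshold $K=N(q+1)+4q$.

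\textbf{Correctness.} The YES direction is immediate: any exact cover $\{F_{i_1},\dots,F_{i_q}\}$ yields a refactoring with $q$ invented rules of size $4$ and $N$ refactored rules of size $q+1$, giving exactly $size(\mathcal{Q})=K$. For the NO direction I rely on the elementary bound that $m$ $3$-element sets cover at most $3m$ points, with equality iff they are pairwise disjoint. Combined with the coverage requirement $3m+\ell\geq 3q$, this forces $|\mathcal{C}_j|+|L_j|\geq q+1$ whenever $\mathcal{F}$ contains no exact cover, so every refactored rule has body size at least $q+1$. A monotonicity check (the marginal change $-2q+1$ from refactoring one more rule is negative, so refactoring all $N$ rules is optimal) together with a case analysis on $k=|\mathcal{C}|$ then gives $size(\mathcal{Q})\geq K+\Omega(N)>K$ for $N\geq 3$.

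\textbf{Main obstacle.} The delicate step is the NO-direction lower bound, which must rule out mixed strategies: refactorings that keep some $r_j$'s intact, use overlapping $\mathcal{C}_j$'s, or pick few invented rules with large leftovers. A symmetry observation---all $r_j$ share the same body and differ only in the head predicate---lets one collapse the search to a uniform global choice of $\mathcal{C}\subseteq\mathcal{F}$ applied identically to every $r_j$, after which the elementary counting bound on unions of $3$-sets closes the gap. The remaining steps---polynomial size of the construction and correctness of the threshold---are routine.
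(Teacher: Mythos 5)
Your reduction is correct, but it takes a genuinely different route from the paper's. The paper reduces from maximum independent set in 3-regular Hamiltonian graphs: it builds one rule per edge (each containing all edge-atoms $b_1,\dots,b_{|E|}$ plus a distinguishing atom $c_i$) and one candidate invented rule per vertex (its body being the atoms of the incident edges), and then argues through two intermediate propositional problems (one forbidding overlapping invented rules, one allowing them) and several counting lemmas based on bounds for the independence and independent-domination numbers of regular graphs that optimal refactorings must contain a maximum independent set. Your X3C reduction is more elementary: the single shared variable $A$ collapses unfolding to set union of bodies, so a refactored rule is just a choice of covering $3$-sets plus leftover atoms, and the NO direction rests only on the fact that $m$ three-sets cover at most $3m$ points with equality iff pairwise disjoint. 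The step you flag as delicate does close, and you do not even need the symmetry/uniformisation argument: writing $c=|\mathcal{Q}\cap S|$ and treating a kept rule as the degenerate case $\mathcal{C}_j=\emptyset$, $L_j=U$, each rule contributes at least $1+3q-2c$ when $c<q$ and at least $q+2$ when $c\geq q$ (absent an exact cover), so the total is at least $4c+N(1+3q-2c)>K$ in the first case and at least $4q+N(q+2)=K+N$ in the second, for every $c$ once $N\geq 3$. What the paper's heavier route buys is a structural characterisation of optimal refactorings on graph-induced instances, which handles the overlap/duplication subtleties explicitly and connects to the tightness construction for Theorem~\ref{thm:optref}; what yours buys is a much shorter, self-contained hardness proof. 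One arithmetic nit: the marginal gain from refactoring one additional copy is $-2q$, not $-2q+1$.
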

\begin{proof}[Proof (Sketch)] 
{The full proof is in \appendixname~\ref{app:proofcomplex}. We provide a reduction from \emph{maximum independent set in 3-regular Hamiltonian graphs}~\cite{FLEISCHNER20102742} to a propositional variant of OKR where $c_1,c_2\in {S}$ may refactor the same rule of $\mathcal{P}$ iff $\mathit{body}(c_1)\cap \mathit{body}(c_2)= \emptyset$ (OPKR$_{WD}$). Let $G= (V,E)$ be a 3-regular Hamiltonian graph and $C$ a maximum independent set of $G$. Observe that in $G'= (V\setminus C,E')$, where $E'$ contains all edges from $E$ with endpoints in $V\setminus C$, every vertex has degree at most 2. Thus, computing a maximum weighted independent set of $G'$ is possible in polynomial time. Furthermore, it can be shown that optimal refactorings of instances derived from a 3-regular Hamiltonian graph $G$ are always a superset of a maximum independent set of $G$. Thus, we can reduce OPKR$_{WD}$ to OPKR, a fragment of OKR.}
\end{proof}

\section{\name{}} \label{sec:impl}

Given an input logic program $\mathcal{P}$, a space of possible invented rules $S$, and a maximum number $K$ of invented rules to consider, \name{} refactors $\mathcal{P}$ by finding ${C} \subseteq {S}$ where $|{C}|=K$.
In other words, \name{} solves the optimal knowledge refactoring problem.

Before describing how we search for a refactoring, we first introduce \emph{linear invented rules} to restrict the space of possible invented rules.

\subsection{Linear Invented Rule}
The space of possible invented rules ${S}$ directly influences the effectiveness and efficiency of refactoring.
For example, \knorf{} defines ${S}$ as all subsets of rules in an input program.
For instance, if there is an input rule \emph{{g}(A) $\leftarrow$ {p}(A,B), {q}(B,C), {r}(B,D)}, \knorf{} adds four possible invented rules to ${S}$:
\[
\left\{
\begin{array}{l}
    \emph{aux$_1$(A,B,C) $\leftarrow$ {p}(A,B), {q}(B,C)} \\
    \emph{aux$_2$(A,B,D) $\leftarrow$ {p}(A,B), {r}(B,D)} \\
    \emph{aux$_3$(B,C,D) $\leftarrow$ {q}(B,C), {r}(B,D)} \\
    \emph{aux$_4$(A,B,C,D) $\leftarrow$ {p}(A,B), {q}(B,C), {r}(B,D)}
\end{array}
\right\}
\]
Defining ${S}$ this way has the advantage that it restricts which literals can appear in an invented rule.
However, this approach also has disadvantages. 
As we show in Section~\ref{sec:intro}, this approach cannot invent a rule that is not a subset of any input rule (e.g. \emph{aux}$_2$ in $\mathcal{P}_3$), nor can the size of the invented rule be larger than any input rule (e.g., \emph{aux}$_3$ in $\mathcal{Q}_2$).

To overcome these limitations, we use a new approach to define the space of invented rules.
We denote the set of body predicate symbols in the logic program $\mathcal{P}$ as $Pr({\mathcal{P}})$ and a finite set of variables as $\mathcal{X}$.
We define the set of all possible body literals $L = \left\{ p(v) \mid p \in Pr({\mathcal{P}}), v \in \mathcal{X}^{\mathit{arity}(p)} \right\}$.
We define ${S}$ as the space of all combinations of literals from $L$, i.e., the power set of $L$.
The size of ${S}$ is exponential in the size of $L$ and $S$ is clearly a superset of the space considered by \knorf{}.
Therefore, our key insight is not to consider all combinations of $L$ and to instead only consider the most general ones.
We call such rules \emph{linear invented rules}:
\begin{definition}[\textbf{Linear invented rule}]
    A linear invented rule is rule with linear variable occurrence in its body.
\end{definition}
\begin{example}[\textbf{Linear invented rule}]
The rule \emph{aux$_1$(A,B,C) $\leftarrow$ {p}(A,B), {q}(B,C)} is not linear because its body literals share the variable $B$. 
By contrast, the rule \emph{aux$_2$(A,B,C,D) $\leftarrow$ {p}(A,B), {q}(C,D)} is linear.
\end{example}

\noindent
For any invented rule $r$, we can always build a linear invented rule $lin(r)$ as follows:
(i) rewrite the body literals such that the predicate symbols remain unchanged and the variables occur linearly, then (ii) build a new head literal such that $var(head(lin(r))) = \bigcup_{a \in body(lin(r))}{var(a)}$.

To motivate the use of linear invented rules in refactoring, we show the theorem:
\begin{theorem}
    \label{thm:linear}
    If the optimal knowledge refactoring problem has a solution using the set of invented rules ${C} \subseteq {S}$ then it has a solution using ${C}' \subseteq {S}_{lin}$, where ${S}_{lin} = \left\{lin(s) \mid s \in S \right\}$.
\end{theorem}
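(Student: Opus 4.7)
The plan is to show that any refactoring using arbitrary invented rules $C \subseteq S$ can be transformed into one using linear invented rules of the same size. Given a refactored program $\mathcal{Q}$ witnessing a solution with invented rule set $C$, I would define $C' = \{lin(r) \mid r \in C\}$ and explicitly construct a refactored program $\mathcal{Q}'$ using $C'$ with $size(\mathcal{Q}') = size(\mathcal{Q})$; optimality then transfers since $\mathcal{Q}$ was assumed optimal.

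First I would fix, for each $r \in C$, a canonical linearisation $lin(r)$ together with a position map $\pi_r$ that records, for each argument position $j$ in $head(lin(r))$, which variable occurrence in $body(r)$ the corresponding fresh variable replaced. The four conditions of Definition~\ref{def:aux} hold for $lin(r)$ by construction: the head predicate symbol is preserved, body predicate symbols are preserved, all variables are drawn from $\mathcal{X}$, and the head collects exactly the union of body variables. Second, for every refactored rule $r' \in \mathcal{Q} \cap \mathcal{I}(\mathcal{P}, C)$ and every body occurrence $r_i(t_1, \ldots, t_k)$ in $r'$ originating from some $r_i \in C$, I would replace it with $lin(r_i)(s_1, \ldots, s_m)$, where $s_j = t_{\pi_{r_i}(j)}$. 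The resulting rule $r''$ has the same head and the same number of body literals as $r'$. The key commutation claim to verify is that unfolding $r''$ upon $lin(r_i)$ yields exactly the same rule as unfolding $r'$ upon $r_i$, because both produce the body literals of $r_i$ with the same terms substituted into the same argument positions.

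Assembling $\mathcal{Q}' = (\mathcal{Q} \cap \mathcal{P}) \cup C' \cup \{r''\}$ and iterating the commutation claim across all invented rules gives $\textsf{unfold}(\mathcal{Q}' \setminus C', \mathcal{Q}' \cap C') = \textsf{unfold}(\mathcal{Q} \setminus C, \mathcal{Q} \cap C) = \mathcal{P}$, so $\mathcal{Q}'$ is a refactored program. Size preservation is immediate since $|body(lin(r))| = |body(r)|$ and each refactored rule retains exactly one literal per invented-rule invocation. The main obstacle is making the commutation step precise: the call-site arguments $(s_1, \ldots, s_m)$ must encode exactly the same identifications among body variables of $r_i$ that were previously enforced implicitly by shared variables in $r_i$. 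Any equality previously forced by a single head variable of $r_i$ appearing in multiple body positions is recovered at the call site by repeating the corresponding $t_\ell$ across the relevant linearised positions. The argument is essentially bookkeeping, but care is needed when head variables of $r_i$ occur multiple times both in its head and in its body, so that no identification is lost in translation.
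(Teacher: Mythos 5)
Your proposal is correct and follows essentially the same route as the paper's proof: replace each invented rule $r$ by $lin(r)$ and recover the variable identifications at each call site by repeating the appropriate arguments, noting that size is measured in literals so the wider head of $lin(r)$ costs nothing. The paper's version is a terser sketch of the same argument (it additionally observes that several rules of $C$ may collapse onto one linear rule, giving $|C'|\leq|C|$, but this is inessential since equal size already suffices).
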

\begin{proof}
  Suppose there is an invented rule $r \in {C}$, we can always build a linear invented rule $r'=lin(r)$, then ${C}'={C} \setminus \{r\} \cup \{r'\}$.
  First, observe that $|{C}'|\leq|{C}|$ because either an invented rule has a unique linear invented rule generalising it, or multiple invented rules of $C$ are generalised by the same linear invented rule.
  Second, for any body literal in the input program that is covered by $r$, it can also be covered by $r'$ via variable mapping.
  Thus, we can always get a refactored program of the same or smaller size using ${C}'$.
\end{proof}

\noindent
{Theorem~\ref{thm:linear} implies that we can restrict $S$ to linear invented rules without compromising the optimal solution.
As we show below, defining $S$ as linear invented rules allows us to simplify the COP formulation.}

\subsection{COP Encoding}
\label{sec:encoding}

According to Theorem~\ref{thm:complex}, the optimal knowledge refactoring problem is $\mathcal{NP}$-hard, making it computationally challenging.
Constraint programming (CP) is a successful framework for modeling and solving hard combinatorial problems \cite{DBLP:conf/ijcai/Hebrard18}.
Therefore, \name{} formulates this search problem as a constrained optimisation problem (COP) \cite{rossi2006handbook}.
Given (i) a set of \emph{decision variables}, (ii) a set of \emph{constraints}, and (iii) an \emph{objective function}, a COP solver finds an assignment to the decision variables that satisfies all the specified constraints and minimises the objective function.

We describe our COP encoding below\footnote{
COP problems can also be encoded as MaxSAT problems \cite{maxsat}, an optimisation variant of SAT.
We also develop a MaxSAT encoding and include it in the appendix.}.

\subsubsection{Decision Variables}
\name{} builds decision variables to determine (a) which literals are in which invented rules, and (b) how to refactor input rules using the invented rules\footnote{
\knorf{} is not involved in task (a), as it creates a decision variable for each invented rule in $S$.
However, as previously stated, in our definition the size of $S$ is larger than \knorf{}, even when considering only linear invented rules.
To improve scalability, \name{} uses decision variables based on literals to build invented rules.
Specifically, we only find which predicate symbols occur in the body of an invented rule, then assign variables linearly.}.

For task (a), for each possible invented rule $r_k$ for $k \in [1,K]$ and $p \in Pr({\mathcal{P}})$, we use an integer variable $\mathit{r}_{k,p}$ to indicate the number of literals with predicate symbol $p$ in the body of $r_k$. 

For example, consider the input program:
\[
\mathcal{P} = \left\{
\begin{array}{l}
 c_1: \emph{g(A)} \leftarrow \emph{p(B)\textsuperscript{\rm a1}, p(C)\textsuperscript{\rm a2}, q(A,B)\textsuperscript{\rm a3}, q(B,C)\textsuperscript{\rm a4}} \\
 c_2: \emph{g(A)} \leftarrow \emph{p(B)\textsuperscript{\rm a1}, q(A,B)\textsuperscript{\rm a2}, q(A,C)\textsuperscript{\rm a3}, s(C)\textsuperscript{\rm a4}} 
\end{array}
\right\}
\]
where the superscript denotes the index of a body literal.
For simplicity, assume that $K = 1$, i.e.  we can only invent one rule denoted $r_1$. 
The decision variables for inventing this rule are $\{\mathit{r}_{1,p}$, $\mathit{r}_{1,q}$, $\mathit{r}_{1,s}\}$.
Suppose $\mathit{r}_{1,p} = \mathit{r}_{1,q} = 1$ and $\mathit{r}_{1,s} = 0$.
Then $r_1: \emph{aux}_1\emph{(A,B,C)} \leftarrow \emph{p(A), q(B,C)}$.

For task (b), \name{} creates two groups of decision variables: (i) $\mathit{use_{c,k}}$ to denote whether an input rule $c$ is refactored with the invented rule $r_k$, and (ii) $\mathit{cover_{c,a,k}}$ to denote whether a body literal $a$ in an input rule $c$ is covered by an invented rule $r_k$, which means this literal is not in the refactoring of $c$.

Concerning (i), for each input rule $c \in \mathcal{P}$ and $k \in [1,K]$ we use a Boolean variable $\mathit{use}_{c,k}$ to indicate that $c$ is refactored using the invented rule $r_k$.
However, as we show later in the example, an input rule can be refactored using the same invented rule multiple times.
Therefore, we use the Boolean variable $\mathit{use}_{c,k}^t$ where $t\geq 1$ to indicate that $r_k$ is used $t$ times\footnote{
The domain of $t$ is defined by the maximum number of times a predicate symbol appears in an input rule.
}
in the refactoring of $c$.

Concerning (ii), for each input rule $c \in \mathcal{P}$, literal $a \in body(c)$, $k \in [1,K]$, and integer $t \geq 1$, we use a Boolean variable $\mathit{cover}_{c,a,k}^{t}$ to indicate that $a$ is covered by the $t$-th instance of $r_k$ in the refactoring.

Regarding the above example, the decision variables used to determine the refactoring of each input rule are:
\[
\noindent
\begin{array}{llll}
    {\mathit{use}_{c_1,1}^1} & {\mathit{use}_{c_1,1}^2} & {\mathit{use}_{c_2,1}^1} & {\color{lightgray}\mathit{use}_{c_2,1}^2} \\
\end{array}
\]
\[
\noindent
\begin{array}{llll}
    {\mathit{cover}_{c_1,{\rm a1},1}^{1}} & {\color{lightgray}\mathit{cover}_{c_1,{\rm a1},1}^{2}} & {\color{lightgray}\mathit{cover}_{c_1,{\rm a2},1}^{1}} & {\mathit{cover}_{c_1,{\rm a2},1}^{2}} \\
    {\mathit{cover}_{c_1,{\rm a3},1}^{1}} & {\color{lightgray}\mathit{cover}_{c_1,{\rm a3},1}^{2}} & {\color{lightgray}\mathit{cover}_{c_1,{\rm a4},1}^{1}} & {\mathit{cover}_{c_1,{\rm a4},1}^{2}} \\ 
    {\mathit{cover}_{c_2,{\rm a1},1}^{1}} & {\color{lightgray}\mathit{cover}_{c_2,{\rm a1},1}^{2}} & {\mathit{cover}_{c_2,{\rm a2},1}^{1}} & {\color{lightgray}\mathit{cover}_{c_2,{\rm a2},1}^{2}} \\ 
    {\color{lightgray}\mathit{cover}_{c_2,{\rm a3},1}^{1}} & {\color{lightgray}\mathit{cover}_{c_2,{\rm a3},1}^{2}} & {\color{lightgray}\mathit{cover}_{c_2,{\rm a4},1}^{1}} & {\color{lightgray}\mathit{cover}_{c_2,{\rm a4},1}^{2}}
\end{array}
\]
If the {\color{lightgray} gray} decision variables are false and the rest are true then the refactored rules are:
\[
\mathcal{Q} = \left\{
\begin{array}{l}
\emph{g(A)} \leftarrow \emph{aux}_1\emph{(B,A,B), }\emph{aux}_1\emph{(C,B,C)} \\
\emph{g(A)} \leftarrow \emph{aux}_1\emph{(B,A,B), }\emph{q(A,C), s(C)}
\end{array}
\right\}
\]

\subsubsection{Constraints}
\name{} ensures the validity of the refactoring through a set of constraints.

We use Constraint~(\ref{eq:cons2}) to ensure that for $k \in [1,K]$,  if the invented rule $r_k$ is used $t$ times in the refactoring of $c \in \mathcal{P}$ then $r_k$ is used $t-1$ times in the refactoring of $c$.
\begin{equation}
    \label{eq:cons2}
    \forall c,k,t\,(t>1),\; \mathit{use}_{c,k}^t \rightarrow \mathit{use}_{c,k}^{t-1}
\end{equation}

\noindent
We use Constraint~(\ref{eq:cons5}) to ensure that for $k \in [1,K]$, if $r_k$ is used $t$ times to refactor $c \in \mathcal{P}$, then $r_k$ cannot contain a predicate symbol $p$ which is not the predicate symbol of at least one literal in $\mathit{body}(c)$.
\begin{equation}
\label{eq:cons5}
    \forall c, p, k, t\,(p {\rm \;not\;in\;} c),\; \neg \left( \mathit{use}_{c,k}^t \wedge \mathit{r}_{k,p}>0 \right)
\end{equation}

\noindent
We use Constraint~(\ref{eq:cons6}) to ensure that for $k\in [1,K]$, that any predicate symbol $p$ of a literal $a$ in the body of $c\in \mathcal{P}$ covered by invented rule $r_k$, is the predicate symbol of a literal in the body of $r_k$. Below, $e(a,p)$ abbreviates $pred(a)=p$.
\begin{equation}
    \label{eq:cons6}
    \forall c,a,k,t\,(e(a,p))\; \mathit{cover}_{c,a,k}^{t} \rightarrow \left( \mathit{use}_{c,k}^t \wedge \mathit{r}_{k,p}>0 \right)
\end{equation}
We use Constraint~(\ref{eq:cons7}) to ensure that for $k\in [1,K]$ and $c\in \mathcal{P}$, if the predicate symbol $p$ occurs in multiple body literals of $c$ and $r_{k,p}$ times in $r_k$ then a single use of $r_k$ can only cover $r_{k,p}$ of the occurrences of $p$ in $c$.

\begin{equation}
    \label{eq:cons7}
    \forall c,p,k,t,\; \left(\sum_{a \in body(c) \wedge e(a,p)} \mathit{cover}_{c,a,k}^{t}\right) \le \mathit{r}_{k,p}
\end{equation}
Note that if $\mathit{r}_{k,p}$ is larger than the number of body literals with predicate symbol $p$ in a rule, it would still be valid, because these literals can be refactored by duplication.
For example, given the input rule \emph{{g}(A) $\rightarrow$ {p}(A,B)} and the invented rule \emph{{aux}(A,B,C,D) $\rightarrow$ {p}(A,B),{p}(C,D)}, we can refactor the same literal twice and get \emph{{g}(A) $\rightarrow$ {aux}(A,B,A,B)}.

\subsubsection{Objective}

Before introducing our objective function, we first create two useful yet redundant decision variables.
First, for each $k \in [1,K]$, we use a Boolean variable $\mathit{used}_{k}$ to indicate whether the invented rule $r_k$ is used in the refactoring.
Second, for each input rule $c \in \mathcal{P}$ and literal $a \in body(c)$, we use a Boolean variable $\mathit{covered}_{c,a}$ to indicate that $a$ is not in the refactoring of $c$.
These variables are directly calculated from $\mathit{use}$ and $\mathit{cover}$ respectively.
The constraints are as follows:
\begin{equation}
    \label{eq:cons3}
    \forall k,\; \left( \mathit{used}_k \leftrightarrow \exists c,t,\; \mathit{use}_{c,k}^t \right)
\end{equation}
\begin{equation}
    \label{eq:cons4}
    \forall c,a,\; \left( \mathit{covered}_{c,a} \leftrightarrow \exists k,t,\; \mathit{cover}_{c,a,k}^{t} \right)
\end{equation}

\noindent
Following Definition~\ref{def:oprp}, \name{} searches for the refactored program with the smallest size. In our encoding, \name{} achieves this goal by minimising the following objective function using a COP solver:
\begin{equation}
    \label{eq:obj}
    \sum_{k}{\mathit{used}_{k}} + \sum_{k,p}{\mathit{r}_{k,p}} + \sum_{c,k,t}{\mathit{use}_{c,k}^t} - \sum_{c,a}{\mathit{covered}_{c,a}}
\end{equation}
The first two terms are the size of the head and body literals used to define the invented rules added to the refactored program respectively.
The third term is the size of the invented rules in the refactored rules of the input program.
The last term is the total number of covered body literals in the input program, which should be deducted from the size.
Note that by adding the objective to the size of the input program, we get the size of the refactored program.

\subsection{The Optimal Refactoring}
We show that by solving the aforementioned COP problem, \name{} solves the OKR problem (Definition~\ref{def:oprp}):

\begin{theorem}
    \label{thm:optref}
    Let $\mathcal{P}$ be a logic program such that $|\mathcal{P}|=n$ and ${S}$ a set of invented rules. 
    Then  \name{} can solve the OKR problem with 
        $K= \left\lceil \frac{n}{4}\right\rceil\left\lceil \frac{s-1}{2}\right\rceil$,
where $s = \max \{ \mathit{size}(c) \mid c \in \mathcal{P} \}$.
\end{theorem}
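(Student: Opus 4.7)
The plan is to show that some optimal refactoring of $\mathcal{P}$ uses at most $\lceil n/4 \rceil \lceil (s-1)/2 \rceil$ invented rules; since \name{} allows up to $K$ of them, it can then recover this refactoring.

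First, starting from any optimal refactoring $\mathcal{Q}$, I would iteratively unfold $\mathcal{Q}$ on any invented rule $r \in \mathcal{Q} \cap S$ whose removal does not strictly decrease $\mathit{size}(\mathcal{Q})$. Each such unfolding yields a refactoring of size $\leq \mathit{size}(\mathcal{Q})$, hence still optimal, but with strictly fewer invented rules. The process terminates with an optimal refactoring $\mathcal{Q}^*$ in which every invented rule strictly reduces the overall size.

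Next, I would translate the strict-reduction property into a minimum coverage bound. For any $r \in \mathcal{Q}^* \cap S$, let $b = |\mathit{body}(r)|$ and let $m$ be the total number of occurrences of $r$ across the refactored rules, counting repeated uses within a single input rule. Including $r$ contributes $(b+1) + m$ literals to $\mathcal{Q}^*$ while removing $mb$ literals from the refactored input rules, so strict improvement is $(m-1)(b-1) \geq 3$. This forces $m, b \geq 2$ and gives $\min mb = 8$, attained at $(m,b) \in \{(2,4),(4,2)\}$.

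Finally, a global counting argument closes the proof. Each body-literal occurrence of $\mathcal{P}$ is covered by at most one invented-rule instance (otherwise unfolding would produce duplicate copies and fail to reconstruct the original rule), so the total coverage is bounded by $\sum_{c \in \mathcal{P}}(\mathit{size}(c) - 1) \leq n(s-1)$. Dividing by the per-rule minimum of $8$ gives at most $\lceil n(s-1)/8 \rceil$ invented rules, and since $\lceil a \rceil \lceil b \rceil \geq \lceil ab \rceil$ for $a,b \geq 0$, we have $\lceil n(s-1)/8 \rceil \leq \lceil n/4 \rceil \lceil (s-1)/2 \rceil$, which delivers the claimed bound. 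I expect the main obstacle to lie in the first step: precisely formalising the single-rule unfolding operation when a refactored rule contains several distinct uses of the same $r$, and verifying that the resulting program remains a valid refactoring of $\mathcal{P}$ rather than breaking because of overlapping covers.
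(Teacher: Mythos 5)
Your overall strategy is sound and takes a genuinely different route from the paper. The paper's proof classifies invented rules by body size: it argues that a ``$\mathcal{P}$-minimal'' rule (body size $2$) must refactor at least $4$ input rules to pay for itself, that each input rule needs at most $\left\lceil \frac{s-1}{2}\right\rceil$ such rules, and then checks that larger body sizes only improve the product. You instead derive the per-rule strict-improvement inequality $(m-1)(b-1)\geq 3$ and divide the total number of body-literal occurrences by the resulting minimum per-rule coverage of $8$. Your accounting is arguably cleaner and even yields the slightly stronger bound $\left\lceil \frac{n(s-1)}{8}\right\rceil$, from which the stated $K$ follows via $\lceil a\rceil\lceil b\rceil \geq \lceil ab\rceil$; it is consistent with the paper's tightness example ($n=12$, $s=7$), where both bounds equal $9$.

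There is, however, a genuine gap in your final counting step. The claim that each body-literal occurrence is covered by at most one invented-rule instance is not supported by your parenthetical reason and is false in general: rule bodies are sets, so unfolding collapses duplicates and still reconstructs the original rule. The paper relies on exactly this, e.g.\ $g(A)\leftarrow \mathit{aux}(A,B,A,B)$ with $\mathit{aux}(A,B,C,D)\leftarrow p(A,B),p(C,D)$ unfolds to $g(A)\leftarrow p(A,B)$, so one occurrence of $p(A,B)$ is ``covered twice''; overlapping covers by distinct invented rules are likewise not excluded in optimal solutions (the appendix's Lemma~\ref{lem:overlap} explicitly permits single-literal overlaps). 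Consequently $\sum_r m_r b_r$ can exceed $n(s-1)$, and dividing that sum by $\min_r m_r b_r = 8$ is not licensed as written. The repair is to count marginal coverage: let $d_r$ be the number of literal occurrences covered \emph{only} by instances of $r$, i.e.\ the literals that reappear when $r$ is unfolded away. Your pruning step guarantees $d_r \geq b_r + m_r + 2$, while $d_r \leq m_r b_r$ gives $(m_r-1)(b_r-1)\geq 3$, hence $m_r+b_r\geq 6$ and $d_r\geq 8$; and the sets counted by the $d_r$ are pairwise disjoint, so $\sum_r d_r \leq n(s-1)$. With $d_r$ in place of $m_r b_r$, your argument goes through.
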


\noindent
The proof is shown in \appendixname~\ref{app:tightbound}.

\section{Experiments}
To test our claim that \name{} can find better refactorings than current approaches, our experiments aim to answer the question:

\vspace{3pt}
\noindent
\textbf{Q1:} How does \name{} compare against \knorf{} on the optimal knowledge refactoring problem?
\vspace{3pt}

\noindent 
To answer \textbf{Q1}, we compare the compression rate of the refactored programs produced by \name{} and \knorf{} given the same timeout.

To understand the scalability of \name{}, our experiments aim to answer the question:

\vspace{3pt}
\noindent
\textbf{Q2:} How does \name{} scale on larger and more diverse input programs?
\vspace{3pt}

\noindent To answer \textbf{Q2}, we evaluate the runtime and refactoring rate of \name{} on progressively larger programs.

\paragraph{Settings}
We use two versions of \name{}: one which uses the COP solver CP-SAT\footnote{https://developers.google.com/optimization} and another which uses the MaxSAT solver UWrMaxSat
\cite{piotrow2020uwrmaxsat}.
We consider at most two invented rules in a refactoring, as we found it leads to good results.
For \knorf{}, we run the code from its repository\footnote{https://github.com/sebdumancic/knorf\_aaai21} and use the same parameters as stated in that paper.
All experiments use a computer with 5.2GHz CPUs and 16GB RAM. 
We run \name{} and \knorf{} on a single CPU.

\paragraph{Reproducibility} The experimental data and the code to reproduce the results is in the appendix and will be made publicly available if the paper is accepted for publication.

\definecolor{mygreen}{rgb}{0.1,0.5,0.1}
\definecolor{myblue}{rgb}{0.0,0.4,0.8}
\definecolor{mygray}{rgb}{0.6,0.6,0.6}
\pgfplotstableread{
task knorf maxrefactor
1 0.40 0.72
2 0.36 0.71
3 0.38 0.71
4 0.38 0.70
5 0.36 0.70
6 0.35 0.70
7 0.40 0.70
8 0.35 0.70
9 0.36 0.70
10 0.36 0.70
11 0.39 0.70
12 0.32 0.70
13 0.38 0.70
14 0.40 0.70
15 0.39 0.70
16 0.35 0.69
17 0.35 0.69
18 0.38 0.69
19 0.35 0.69
20 0.32 0.69
21 0.31 0.69
22 0.34 0.69
23 0.34 0.69
24 0.31 0.69
25 0.34 0.69
26 0.37 0.69
27 0.41 0.69
28 0.36 0.69
29 0.31 0.69
30 0.32 0.69
31 0.36 0.69
32 0.37 0.69
33 0.31 0.68
34 0.36 0.68
35 0.37 0.68
36 0.34 0.68
37 0.31 0.68
38 0.31 0.68
39 0.27 0.68
40 0.36 0.68
41 0.33 0.68
42 0.31 0.68
43 0.36 0.68
44 0.31 0.68
45 0.35 0.68
46 0.34 0.68
47 0.31 0.68
48 0.33 0.68
49 0.32 0.68
50 0.32 0.68
51 0.35 0.68
52 0.36 0.68
53 0.34 0.68
54 0.34 0.68
55 0.32 0.68
56 0.36 0.68
57 0.31 0.68
58 0.33 0.68
59 0.32 0.68
60 0.36 0.68
61 0.30 0.68
62 0.30 0.68
63 0.33 0.68
64 0.29 0.68
65 0.35 0.68
66 0.33 0.68
67 0.34 0.68
68 0.30 0.68
69 0.34 0.68
70 0.32 0.68
71 0.35 0.68
72 0.34 0.68
73 0.35 0.68
74 0.29 0.68
75 0.34 0.68
76 0.35 0.68
77 0.38 0.68
78 0.35 0.68
79 0.31 0.68
80 0.32 0.68
81 0.34 0.68
82 0.31 0.68
83 0.29 0.68
84 0.37 0.68
85 0.32 0.68
86 0.30 0.68
87 0.35 0.68
88 0.31 0.68
89 0.34 0.68
90 0.35 0.68
91 0.37 0.68
92 0.34 0.68
93 0.35 0.68
94 0.34 0.68
95 0.31 0.67
96 0.28 0.67
97 0.34 0.67
98 0.31 0.67
99 0.33 0.67
100 0.34 0.67
101 0.36 0.67
102 0.32 0.67
103 0.36 0.67
104 0.33 0.67
105 0.31 0.67
106 0.35 0.67
107 0.32 0.67
108 0.34 0.67
109 0.26 0.67
110 0.32 0.67
111 0.34 0.67
112 0.34 0.67
113 0.30 0.67
114 0.28 0.67
115 0.32 0.67
116 0.30 0.67
117 0.31 0.67
118 0.33 0.67
119 0.33 0.67
120 0.29 0.67
121 0.33 0.67
122 0.38 0.67
123 0.31 0.67
124 0.32 0.67
125 0.32 0.67
126 0.30 0.67
127 0.32 0.67
128 0.34 0.67
129 0.32 0.67
130 0.34 0.67
131 0.34 0.67
132 0.32 0.67
133 0.36 0.67
134 0.34 0.67
135 0.31 0.67
136 0.34 0.67
137 0.33 0.67
138 0.34 0.67
139 0.32 0.67
140 0.32 0.67
141 0.34 0.67
142 0.37 0.67
143 0.31 0.67
144 0.33 0.67
145 0.34 0.67
146 0.33 0.66
147 0.34 0.66
148 0.31 0.66
149 0.28 0.66
150 0.30 0.66
151 0.31 0.66
152 0.29 0.66
153 0.33 0.66
154 0.31 0.66
155 0.33 0.66
156 0.30 0.66
157 0.29 0.66
158 0.32 0.66
159 0.29 0.66
160 0.30 0.66
161 0.32 0.66
162 0.26 0.66
163 0.32 0.66
164 0.33 0.66
165 0.36 0.66
166 0.34 0.66
167 0.32 0.66
168 0.31 0.66
169 0.31 0.66
170 0.30 0.66
171 0.30 0.66
172 0.30 0.66
173 0.33 0.66
174 0.30 0.65
175 0.29 0.65
176 0.29 0.65
177 0.26 0.65
178 0.30 0.65
179 0.28 0.65
180 0.29 0.65
181 0.26 0.65
182 0.31 0.65
183 0.26 0.65
184 0.30 0.65
185 0.27 0.65
186 0.31 0.65
187 0.30 0.65
188 0.28 0.65
189 0.29 0.65
190 0.33 0.65
191 0.29 0.65
192 0.29 0.65
193 0.26 0.64
194 0.31 0.64
195 0.30 0.64
196 0.29 0.64
197 0.30 0.64
198 0.29 0.64
199 0.32 0.63
200 0.27 0.62
}\mydatalegocpsat

\pgfplotstableread{
task knorf maxrefactor
1 0.00 0.56
2 0.00 0.53
3 0.12 0.53
4 0.07 0.53
5 0.00 0.53
6 0.08 0.52
7 0.00 0.52
8 0.00 0.52
9 0.00 0.52
10 0.07 0.51
11 0.01 0.51
12 0.00 0.50
13 0.05 0.50
14 0.00 0.50
15 0.00 0.49
16 0.00 0.49
17 0.00 0.49
18 0.10 0.49
19 0.00 0.49
20 0.12 0.49
21 0.03 0.49
22 0.00 0.48
23 0.10 0.48
24 0.00 0.48
25 0.02 0.48
26 0.00 0.47
27 0.13 0.47
28 0.00 0.47
29 0.00 0.46
30 0.00 0.46
31 0.00 0.46
32 0.00 0.45
33 0.00 0.45
34 0.00 0.44
35 0.10 0.44
36 0.13 0.44
37 0.12 0.44
38 0.02 0.44
39 0.08 0.44
40 0.00 0.43
41 0.00 0.43
42 0.00 0.43
43 0.00 0.43
44 0.13 0.43
45 0.00 0.43
46 0.00 0.43
47 0.00 0.43
48 0.00 0.43
49 0.00 0.43
50 0.00 0.43
51 0.00 0.42
52 0.00 0.42
53 0.10 0.42
54 0.00 0.42
55 0.07 0.41
56 0.15 0.41
57 0.11 0.41
58 0.00 0.41
59 0.00 0.41
60 0.00 0.41
61 0.15 0.41
62 0.11 0.41
63 0.00 0.41
64 0.00 0.40
65 0.00 0.40
66 0.16 0.40
67 0.07 0.40
68 0.00 0.40
69 0.00 0.40
70 0.00 0.40
71 0.00 0.40
72 0.00 0.40
73 0.00 0.40
74 0.00 0.39
75 0.01 0.36
76 0.00 0.35
77 0.00 0.35
78 0.07 0.35
79 0.00 0.34
80 0.00 0.34
81 0.01 0.34
82 0.00 0.34
83 0.00 0.34
84 0.00 0.33
85 0.00 0.33
86 0.11 0.33
87 0.00 0.33
88 0.00 0.33
89 0.01 0.33
90 0.00 0.33
91 0.00 0.32
92 0.00 0.32
93 0.00 0.32
94 0.00 0.32
95 0.05 0.32
96 0.00 0.32
97 0.20 0.31
98 0.11 0.31
99 0.18 0.31
100 0.00 0.31
101 0.00 0.30
102 0.00 0.30
103 0.00 0.29
104 0.00 0.29
105 0.00 0.29
106 0.04 0.29
107 0.00 0.29
108 0.00 0.28
109 0.03 0.28
110 0.10 0.28
111 0.00 0.28
112 0.12 0.28
113 0.00 0.28
114 0.00 0.27
115 0.00 0.27
116 0.00 0.27
117 0.00 0.27
118 0.00 0.27
119 0.00 0.26
120 0.00 0.26
121 0.26 0.26
122 0.11 0.26
123 0.00 0.26
124 0.03 0.25
125 0.00 0.25
126 0.00 0.25
127 0.05 0.25
128 0.00 0.25
129 0.00 0.24
130 0.00 0.24
131 0.00 0.24
132 0.00 0.23
133 0.00 0.22
134 0.00 0.22
135 0.12 0.22
136 0.00 0.21
137 0.07 0.21
138 0.00 0.20
139 0.00 0.20
140 0.08 0.19
141 0.00 0.19
142 0.00 0.19
143 0.11 0.18
144 0.11 0.18
145 0.00 0.18
146 0.00 0.18
147 0.00 0.18
148 0.00 0.17
149 0.11 0.17
150 0.00 0.16
151 0.02 0.16
152 0.12 0.15
153 0.00 0.14
154 0.00 0.14
155 0.00 0.14
156 0.09 0.12
157 0.00 0.11
158 0.00 0.11
159 0.09 0.10
160 0.00 0.10
161 0.00 0.08
162 0.14 0.08
163 0.01 0.08
164 0.00 0.07
165 0.11 0.07
166 0.00 0.06
167 0.00 0.05
168 0.00 0.05
169 0.01 0.03
170 0.00 0.02
171 0.03 0.02
172 0.19 0.02
173 0.00 0.01
174 0.00 0.01
175 0.00 0.01
176 0.00 0.00
177 0.04 0.00
178 0.00 0.00
179 0.00 0.00
180 0.03 0.00
181 0.00 0.00
182 0.12 0.00
183 0.26 0.00
184 0.00 0.00
185 0.05 0.00
186 0.00 0.00
187 0.00 0.00
188 0.00 0.00
189 0.05 0.00
190 0.00 0.00
191 0.25 0.00
192 0.00 0.00
193 0.00 0.00
194 0.00 0.00
195 0.13 0.00
196 0.00 0.00
}\mydatastringcpsat


\begin{figure}[t]
	\begin{subfigure}{0.465\textwidth}
			\begin{tikzpicture}
				\begin{axis}[%
                    width=1.05\textwidth,
                    height=0.425\textwidth,
					ybar interval,
                    ybar=-0.8pt,
					ymin = 0,
					ymax = 0.8,
					xmin = 0,
					xmax = 201,
					ymajorgrids = true,
					major x tick style = transparent,
					xtick=\empty,
					xticklabel=\empty,
                    xlabel near ticks,
                    ylabel near ticks,
                    xlabel={Program},
                    ylabel={Compression Rate},
                    legend image code/.code={
                       \draw [#1] (0cm,-0.1cm) rectangle (0.2cm,0.25cm); },
					legend cell align=left,
					legend style={
						at={(0.5,0.9)},
						anchor=south,
						legend columns = -1
					} 
					]
					\addplot[style={mygreen,fill=mygreen,fill opacity=0.4,bar width=1pt}] table[y=maxrefactor,x=task]{\mydatalegocpsat};
					\addplot[style={myblue,fill=myblue,fill opacity=0.8,bar width=0.5pt}] table[y=knorf,x=task]{\mydatalegocpsat};
					\legend{MaxRefactor,Knorf}
				\end{axis}
			\end{tikzpicture}
		\caption{Lego}
	\end{subfigure}\\
	\begin{subfigure}{0.465\textwidth}
			\begin{tikzpicture}
				\begin{axis}[%
                    width=1.05\textwidth,
                    height=0.425\textwidth,
					ybar interval,
                    ybar=-0.8pt,%
					ymin = 0,
					ymax = 0.8,
					xmin = 0,
					xmax = 197,
					ymajorgrids = true,
					major x tick style = transparent,
					xtick=\empty,
					xticklabel=\empty,
                    xlabel near ticks,
                    ylabel near ticks,
                    xlabel={Program},
                    ylabel={Compression Rate},
                    legend image code/.code={
                        \draw [#1] (0cm,-0.1cm) rectangle (0.2cm,0.25cm); },
					legend cell align=left,
					legend style={
						at={(0.5,0.9)},
						anchor=south,
						legend columns = -1
					} 
					]
					\addplot[style={mygreen,fill=mygreen,fill opacity=0.4,bar width=1pt}] table[y=maxrefactor,x=task]{\mydatastringcpsat};
					\addplot[style={myblue,fill=myblue,fill opacity=0.8,bar width=0.5pt}]table[y=knorf,x=task]{\mydatastringcpsat};
					\legend{MaxRefactor,Knorf}
				\end{axis}
			\end{tikzpicture}
		\caption{Strings}
	\end{subfigure}
	\caption{
 The compression rates of \name{} and \knorf{} (both calling CP-SAT) with a 600s timeout. There are two bars for each horizontal coordinate, representing the results of \name{} (green) and \knorf{} (blue) on the same program.
 }
	\label{fig:comp-cpsatv2}
\end{figure}

\pgfplotstableread{
task knorf maxrefactor
1 0.40 0.72
2 0.36 0.71
3 0.38 0.71
4 0.38 0.70
5 0.36 0.70
6 0.35 0.70
7 0.40 0.70
8 0.35 0.70
9 0.36 0.70
10 0.36 0.70
11 0.39 0.70
12 0.32 0.70
13 0.34 0.70
14 0.31 0.70
15 0.38 0.70
16 0.40 0.70
17 0.37 0.70
18 0.39 0.70
19 0.35 0.69
20 0.35 0.69
21 0.38 0.69
22 0.35 0.69
23 0.32 0.69
24 0.31 0.69
25 0.34 0.69
26 0.35 0.69
27 0.37 0.69
28 0.34 0.69
29 0.34 0.69
30 0.41 0.69
31 0.36 0.69
32 0.31 0.69
33 0.32 0.69
34 0.36 0.69
35 0.37 0.69
36 0.31 0.68
37 0.31 0.68
38 0.36 0.68
39 0.28 0.68
40 0.37 0.68
41 0.34 0.68
42 0.31 0.68
43 0.31 0.68
44 0.27 0.68
45 0.36 0.68
46 0.34 0.68
47 0.33 0.68
48 0.31 0.68
49 0.36 0.68
50 0.32 0.68
51 0.36 0.68
52 0.31 0.68
53 0.31 0.68
54 0.35 0.68
55 0.34 0.68
56 0.35 0.68
57 0.31 0.68
58 0.33 0.68
59 0.34 0.68
60 0.26 0.68
61 0.32 0.68
62 0.34 0.68
63 0.32 0.68
64 0.32 0.68
65 0.35 0.68
66 0.28 0.68
67 0.36 0.68
68 0.34 0.68
69 0.34 0.68
70 0.32 0.68
71 0.36 0.68
72 0.31 0.68
73 0.33 0.68
74 0.32 0.68
75 0.32 0.68
76 0.36 0.68
77 0.30 0.68
78 0.30 0.68
79 0.33 0.68
80 0.29 0.68
81 0.35 0.68
82 0.33 0.68
83 0.34 0.68
84 0.30 0.68
85 0.34 0.68
86 0.32 0.68
87 0.35 0.68
88 0.34 0.68
89 0.33 0.68
90 0.35 0.68
91 0.29 0.68
92 0.34 0.68
93 0.35 0.68
94 0.38 0.68
95 0.35 0.68
96 0.31 0.68
97 0.29 0.68
98 0.32 0.68
99 0.38 0.68
100 0.31 0.68
101 0.34 0.68
102 0.32 0.68
103 0.31 0.68
104 0.32 0.68
105 0.29 0.68
106 0.32 0.68
107 0.37 0.68
108 0.32 0.68
109 0.30 0.68
110 0.35 0.68
111 0.31 0.68
112 0.34 0.68
113 0.36 0.68
114 0.34 0.68
115 0.33 0.68
116 0.34 0.68
117 0.32 0.68
118 0.32 0.68
119 0.37 0.68
120 0.35 0.68
121 0.34 0.68
122 0.33 0.67
123 0.31 0.67
124 0.33 0.67
125 0.34 0.67
126 0.36 0.67
127 0.33 0.67
128 0.32 0.67
129 0.34 0.67
130 0.31 0.67
131 0.30 0.67
132 0.30 0.67
133 0.30 0.67
134 0.31 0.67
135 0.31 0.67
136 0.33 0.67
137 0.33 0.67
138 0.32 0.67
139 0.29 0.67
140 0.30 0.67
141 0.34 0.67
142 0.33 0.67
143 0.31 0.67
144 0.33 0.67
145 0.30 0.67
146 0.29 0.67
147 0.34 0.67
148 0.29 0.67
149 0.30 0.67
150 0.26 0.67
151 0.32 0.67
152 0.34 0.67
153 0.31 0.67
154 0.33 0.67
155 0.36 0.67
156 0.34 0.67
157 0.34 0.67
158 0.32 0.67
159 0.34 0.67
160 0.31 0.67
161 0.33 0.67
162 0.34 0.67
163 0.34 0.66
164 0.28 0.66
165 0.32 0.66
166 0.29 0.66
167 0.32 0.66
168 0.32 0.66
169 0.26 0.66
170 0.28 0.66
171 0.31 0.66
172 0.31 0.66
173 0.30 0.66
174 0.27 0.66
175 0.31 0.66
176 0.30 0.66
177 0.30 0.66
178 0.33 0.66
179 0.30 0.65
180 0.29 0.65
181 0.26 0.65
182 0.31 0.65
183 0.30 0.65
184 0.29 0.65
185 0.26 0.65
186 0.30 0.65
187 0.31 0.65
188 0.26 0.65
189 0.30 0.65
190 0.29 0.65
191 0.30 0.65
192 0.28 0.65
193 0.29 0.65
194 0.30 0.65
195 0.33 0.65
196 0.29 0.65
197 0.29 0.65
198 0.29 0.65
199 0.32 0.64
200 0.27 0.63
}\mydatalegomaxsat

\pgfplotstableread{
task knorf maxrefactor
1 0.00 0.60
2 0.00 0.60
3 0.08 0.59
4 0.00 0.59
5 0.00 0.59
6 0.00 0.59
7 0.05 0.59
8 0.01 0.59
9 0.00 0.59
10 0.00 0.59
11 0.00 0.59
12 0.02 0.59
13 0.00 0.59
14 0.00 0.59
15 0.00 0.59
16 0.00 0.59
17 0.03 0.58
18 0.12 0.58
19 0.00 0.58
20 0.00 0.58
21 0.00 0.58
22 0.20 0.58
23 0.00 0.58
24 0.00 0.58
25 0.00 0.58
26 0.13 0.58
27 0.00 0.58
28 0.00 0.58
29 0.11 0.58
30 0.00 0.58
31 0.00 0.58
32 0.01 0.58
33 0.00 0.58
34 0.00 0.58
35 0.00 0.58
36 0.00 0.58
37 0.07 0.58
38 0.00 0.58
39 0.00 0.58
40 0.00 0.58
41 0.00 0.58
42 0.00 0.58
43 0.00 0.58
44 0.07 0.58
45 0.00 0.58
46 0.00 0.58
47 0.12 0.58
48 0.00 0.58
49 0.00 0.58
50 0.04 0.58
51 0.00 0.58
52 0.00 0.58
53 0.00 0.58
54 0.00 0.58
55 0.12 0.58
56 0.00 0.58
57 0.15 0.58
58 0.11 0.58
59 0.05 0.58
60 0.00 0.58
61 0.12 0.58
62 0.12 0.58
63 0.00 0.58
64 0.01 0.58
65 0.00 0.58
66 0.00 0.58
67 0.00 0.58
68 0.19 0.58
69 0.26 0.58
70 0.13 0.58
71 0.00 0.58
72 0.00 0.57
73 0.00 0.57
74 0.00 0.57
75 0.00 0.57
76 0.00 0.57
77 0.01 0.57
78 0.00 0.57
79 0.00 0.57
80 0.00 0.57
81 0.00 0.57
82 0.12 0.57
83 0.00 0.57
84 0.11 0.57
85 0.00 0.57
86 0.11 0.57
87 0.00 0.57
88 0.00 0.57
89 0.07 0.57
90 0.00 0.57
91 0.00 0.57
92 0.00 0.57
93 0.00 0.57
94 0.00 0.57
95 0.00 0.57
96 0.00 0.57
97 0.00 0.57
98 0.00 0.57
99 0.09 0.57
100 0.00 0.57
101 0.03 0.57
102 0.07 0.57
103 0.00 0.57
104 0.26 0.57
105 0.03 0.57
106 0.00 0.57
107 0.00 0.57
108 0.00 0.57
109 0.00 0.57
110 0.05 0.57
111 0.00 0.57
112 0.02 0.57
113 0.25 0.57
114 0.00 0.57
115 0.00 0.57
116 0.01 0.57
117 0.01 0.57
118 0.00 0.57
119 0.00 0.56
120 0.10 0.56
121 0.00 0.56
122 0.00 0.56
123 0.03 0.56
124 0.10 0.56
125 0.00 0.56
126 0.00 0.56
127 0.00 0.56
128 0.00 0.56
129 0.00 0.56
130 0.00 0.56
131 0.00 0.56
132 0.00 0.56
133 0.00 0.56
134 0.18 0.56
135 0.00 0.56
136 0.00 0.56
137 0.11 0.56
138 0.11 0.56
139 0.00 0.56
140 0.00 0.56
141 0.00 0.55
142 0.00 0.55
143 0.00 0.55
144 0.00 0.55
145 0.00 0.55
146 0.00 0.55
147 0.00 0.55
148 0.08 0.55
149 0.00 0.55
150 0.00 0.55
151 0.00 0.55
152 0.11 0.55
153 0.00 0.55
154 0.14 0.55
155 0.00 0.55
156 0.00 0.55
157 0.00 0.55
158 0.00 0.55
159 0.00 0.55
160 0.11 0.55
161 0.00 0.55
162 0.02 0.55
163 0.00 0.55
164 0.00 0.55
165 0.00 0.55
166 0.00 0.54
167 0.00 0.54
168 0.00 0.54
169 0.00 0.54
170 0.13 0.54
171 0.00 0.54
172 0.00 0.54
173 0.00 0.54
174 0.00 0.54
175 0.00 0.54
176 0.00 0.54
177 0.08 0.53
178 0.03 0.53
179 0.05 0.53
180 0.07 0.53
181 0.00 0.53
182 0.00 0.53
183 0.04 0.53
184 0.00 0.52
185 0.11 0.51
186 0.10 0.51
187 0.12 0.50
188 0.05 0.50
189 0.00 0.50
190 0.10 0.49
191 0.07 0.49
192 0.13 0.49
193 0.09 0.48
194 0.10 0.47
195 0.16 0.47
196 0.15 0.45
}\mydatastringmaxsat


\begin{figure}[ht]
\newcommand\graphsizetwo{.465}
    \begin{subfigure}{\graphsizetwo\textwidth}
            \begin{tikzpicture}
                \begin{axis}[%
                    width=1.05\textwidth,
                    height=0.425\textwidth,
                    ybar interval,
                    ybar=-0.8pt,
                    ymin = 0,
                    ymax = 0.8,
                    xmin = 0,
                    xmax = 201,
                    ymajorgrids = true,
                    major x tick style = transparent,
                    xtick=\empty,
                    xticklabel=\empty,
                    xlabel near ticks,
                    ylabel near ticks,
                    xlabel={Program},
                    ylabel={Compression Rate},
                    legend image code/.code={
                       \draw [#1] (0cm,-0.1cm) rectangle (0.2cm,0.25cm); },
                    legend cell align=left,
                    legend style={
                        at={(0.5,0.9)},
                        anchor=south,
                        legend columns = -1
                    } 
                    ]
                    \addplot[style={mygreen,fill=mygreen,fill opacity=0.4,bar width=1pt}] table[y=maxrefactor,x=task]{\mydatalegomaxsat};
                    \addplot[style={myblue,fill=myblue,fill opacity=0.8,bar width=0.5pt}] table[y=knorf,x=task]{\mydatalegomaxsat};
                    \legend{MaxRefactor,Knorf}
                \end{axis}
            \end{tikzpicture}
        \caption{Lego}
    \end{subfigure}\\
    \begin{subfigure}{\graphsizetwo\textwidth}
            \begin{tikzpicture}
                \begin{axis}[%
                    width=1.05\textwidth,
                    height=0.425\textwidth,
                    ybar interval,
                    ybar=-0.8pt,%
                    ymin = 0,
                    ymax = 0.8,
                    xmin = 0,
                    xmax = 197,
                    ymajorgrids = true,
                    major x tick style = transparent,
                    xtick=\empty,
                    xticklabel=\empty,
                    xlabel near ticks,
                    ylabel near ticks,
                    xlabel={Program},
                    ylabel={Compression Rate},
                    legend image code/.code={
                        \draw [#1] (0cm,-0.1cm) rectangle (0.2cm,0.25cm); },
                    legend cell align=left,
                    legend style={
                        at={(0.5,0.9)},
                        anchor=south,
                        legend columns = -1
                    } 
                    ]
                    \addplot[style={mygreen,fill=mygreen,fill opacity=0.4,bar width=1pt}] table[y=maxrefactor,x=task]{\mydatastringmaxsat};
                    \addplot[style={myblue,fill=myblue,fill opacity=0.8,bar width=0.5pt}]table[y=knorf,x=task]{\mydatastringmaxsat};
                    \legend{MaxRefactor,Knorf}
                \end{axis}
            \end{tikzpicture}
        \caption{Strings}
    \end{subfigure}
    \caption{
    The compression rates of \name{} (calling MaxSAT) and \knorf{} (calling CP-SAT) with a 600s timeout. 
    There are two bars for each horizontal coordinate, representing the results of \name{} (green) and \knorf{} (blue) on the same program.}
    \label{fig:comp-maxsatv2}
\end{figure}

\subsection{Q1: Comparison with \knorf{}}

\paragraph{Datasets} We use two datasets, Lego and Strings, from the \knorf{} paper. 
Lego has 200 programs, each with a size of 264--411. 
Strings has 196 programs, each with a size of 779--12,309.
We tried using \knorf{} on other datasets but it frequently crashed on them\footnote{\knorf{} crashes on other datasets during its COP modeling phase. 
We contacted the authors of \knorf{} for advice and they confirmed that it is a fundamental bug related to incorrect assumptions on the input program. 
We tried to fix the bug according to their advice but it fixed only a small number of crashes and the runtime significantly increased. 
Therefore, we only report the results on Lego and Strings from the \knorf{} paper.
} so we only report the results on these two datasets.

\paragraph{Method}
We compare the compression rate of the refactored programs returned by \name{} and \knorf{}.
Given an input program $\mathcal{P}$ and a refactored program $\mathcal{Q}$, the compression rate $cr$ is defined as: 
$$cr=\frac{size(\mathcal{P})-size(\mathcal{Q})}{size(\mathcal{P})}$$

\noindent
To be clear, a larger $cr$ indicates better refactoring.

\usepgfplotslibrary{statistics}

\pgfplotstableread{
size mean se
200 23.85 6.97
250 67.04 12.18
300 86.07 17.03
350 210.96 30.06
400 421.74 92.53
450 933.00 307.13
500 1318.64 275.96
550 2091.56 378.06
600 3234.33 251.02
650 3272.13 185.70
700 3600.15 0.00
}\mydatascalabilitydrugdrug

\pgfplotstableread{
size mean se
200 5.11 0.59
250 14.36 1.37
300 32.86 3.66
350 66.97 10.29
400 246.80 71.84
450 685.65 138.04
500 1205.72 215.97
550 1777.77 219.35
600 2445.61 216.65
650 2885.49 185.25
700 3292.89 139.28
750 3562.20 37.49
}\mydatascalabilityalzheimer

\pgfplotstableread{
size mean se
200 2.70 0.32
250 7.41 1.17
300 11.67 1.68
350 18.45 3.26
400 72.65 20.19
450 359.35 169.38
500 297.03 172.55
550 492.01 163.83
600 1041.12 306.59
650 1209.14 316.04
700 1794.02 359.37
750 2404.26 314.08
800 2454.85 326.92
850 2883.24 271.69
900 3357.68 164.80
950 3335.70 144.00
1000 3425.31 126.13
}\mydatascalabilitywordnet

\pgfplotstableread{
time stringsmean drugdrugmean alzheimermean
0 1.00 1.00 1.00
60 0.71 0.85 0.84
300 0.57 0.31 0.33
600 0.27 0.25 0.30
900 0.01 0.20 0.29
1200 0.00 0.18 0.26
1800 0.00 0.15 0.24
}\mydataexamplemean

\pgfplotstableread{
time lowerwhisker lowerquartile median upperquartile upperwhisker
0 1.0 1.0 1.0 1.0 1.0
60 0.6875 0.859375 1.0 1.0 1.0
300 0.15625 0.21875 0.28125 0.3515625 0.40625
600 0.125 0.21875 0.25 0.3125 0.34375
900 0.125 0.25 0.3125 0.34375 0.375
1200 0.125 0.21875 0.25 0.2890625 0.34375
1800 0.125 0.1875 0.25 0.28125 0.3125
}\mydataexamplealzheimer

\pgfplotstableread{
time lowerwhisker lowerquartile median upperquartile upperwhisker
0 1.0 1.0 1.0 1.0 1.0
60 0.56363 0.80909 0.91818 0.98181 1.0
300 0.09090 0.163636 0.218181 0.29090 0.34545
600 0.09090 0.159090 0.2 0.25454 0.29090
900 0.05454 0.145454 0.19090 0.236363 0.309090
1200 0.109090 0.145454 0.18181 0.218181 0.236363
1800 0.018181 0.127272 0.16363 0.2 0.309090
}\mydataexampledrugdrug

\pgfplotstableread{
time lowerwhisker lowerquartile median upperquartile upperwhisker
0 1.0 1.0 1.0 1.0 1.0
60 0.25316 0.52201 0.74420 0.92452 1.0
300 0.22747 0.38466 0.64171 0.74326 0.86057
600 0.009582 0.01395 0.27982 0.49753 0.63935
900 0.0 0.00343 0.00486 0.005867 0.00736
1200 0.0 0.0 0.0 0.0 0.0
1800 0.0 0.0 0.0 0.0 0.0
}\mydataexamplestrings

\begin{figure}[!th]
    \centering
    \newcommand\graphsize{0.231}
    \begin{subfigure}[t]{\graphsize\textwidth}
        \resizebox{\linewidth}{!}{
        \begin{tikzpicture}
            \begin{axis}[
                xlabel={Program Size},
                ylabel={Runtime (s)},
                label style={font=\LARGE},
                xmin=200, xmax=1000,
                ymin=0, ymax=3610,
                error bars/y dir=both,
                legend style={at={(0.2,0.97)},anchor=north},
            ]
            \addplot+[blue,mark=o,error bars/.cd, y dir=both, y explicit]
                table[x=size, y=mean, y error=se] {\mydatascalabilitydrugdrug};
                \addlegendentry{DrugDrug}
            \addplot+[red,mark=square,error bars/.cd, y dir=both, y explicit]
                table[x=size, y=mean, y error=se] {\mydatascalabilityalzheimer};
                \addlegendentry{Alzheimer}
            \addplot+[violet,mark=triangle,error bars/.cd, y dir=both, y explicit]
                table[x=size, y=mean, y error=se] {\mydatascalabilitywordnet};
                \addlegendentry{WordNet}
            \end{axis}
        \end{tikzpicture}
        }
        \caption{Average runtime of optimal refactoring versus program size. The bars are standard errors.}
        \label{fig:time-size}
    \end{subfigure}\hfill
    \begin{subfigure}[t]{\graphsize\textwidth}
        \resizebox{\linewidth}{!}{
        \begin{tikzpicture}
            \begin{axis}[
                xlabel={Time (s)},
                ylabel={Normalised gap},
                label style={font=\LARGE},
                xtick={0, 60, 300, 600, 900, 1200, 1800},
                xmin=-40, xmax=1840,
                ymin=0, ymax=1,
                boxplot/draw direction=y,
                boxplot/box extend=-50,
            ]
            \addplot+[boxplot prepared={
              lower whisker=1.0, lower quartile=1.0,
              median=1.0,
              upper quartile=1.0, upper whisker=1.0,
              draw position=0},
              black, solid
            ] coordinates {};
            \addplot+[boxplot prepared={
              lower whisker=0.25316, lower quartile=0.52201,
              median=0.74420,
              upper quartile=0.92452, upper whisker=1.0,
              draw position=60},
              black, solid
            ] coordinates {};
            \addplot+[boxplot prepared={
              lower whisker=0.22747, lower quartile=0.38466,
              median=0.64171,
              upper quartile=0.74326, upper whisker=0.86057,
              draw position=300},
              black, solid
            ] coordinates {};
            \addplot+[boxplot prepared={
              lower whisker=0.00958, lower quartile=0.01395,
              median=0.27982,
              upper quartile=0.49753, upper whisker=0.63935,
              draw position=600},
              black, solid
            ] coordinates {};
            \addplot+[boxplot prepared={
              lower whisker=0.0, lower quartile=0.00343,
              median=0.00486,
              upper quartile=0.005867, upper whisker=0.00736,
              draw position=900},
              black, solid
            ] coordinates {};
            \addplot+[boxplot prepared={
              lower whisker=0.0, lower quartile=0.0,
              median=0.0,
              upper quartile=0.0, upper whisker=0.0,
              draw position=1200},
              black, solid
            ] coordinates {};
            \addplot+[boxplot prepared={
              lower whisker=0.0, lower quartile=0.0,
              median=0.0,
              upper quartile=0.0, upper whisker=0.0,
              draw position=1800},
              black, solid
            ] coordinates {};
            \addplot+[red,mark=x,mark options={solid,mark size=4}]
                table[x=time, y=stringsmean] {\mydataexamplemean};
            \end{axis}
        \end{tikzpicture}
        }
        \caption{Normalised solution quality versus the runtime (a Strings task, size=11,947, opt-time=23min).}
        \label{fig:gap-time-ex1}
    \end{subfigure}\hfill
    
    \begin{subfigure}[t]{\graphsize\textwidth}
        \resizebox{\linewidth}{!}{
        \begin{tikzpicture}
            \begin{axis}[
                xlabel={Time (s)},
                ylabel={Normalised gap},
                label style={font=\LARGE},
                xtick={0, 60, 300, 600, 900, 1200, 1800},
                xmin=-40, xmax=1840,
                ymin=0, ymax=1,
                boxplot/draw direction=y,
                boxplot/box extend=-50,
            ]
            \addplot+[boxplot prepared={
              lower whisker=1.0, lower quartile=1.0,
              median=1.0,
              upper quartile=1.0, upper whisker=1.0,
              draw position=0},
              black, solid
            ] coordinates {};
            \addplot+[boxplot prepared={
              lower whisker=0.56363, lower quartile=0.80909,
              median=0.91818,
              upper quartile=0.98181, upper whisker=1.0,
              draw position=60},
              black, solid
            ] coordinates {};
            \addplot+[boxplot prepared={
              lower whisker=0.09090, lower quartile=0.16363,
              median=0.21818,
              upper quartile=0.29090, upper whisker=0.34545,
              draw position=300},
              black, solid
            ] coordinates {};
            \addplot+[boxplot prepared={
              lower whisker=0.09090, lower quartile=0.15909,
              median=0.2,
              upper quartile=0.25454, upper whisker=0.29090,
              draw position=600},
              black, solid
            ] coordinates {};
            \addplot+[boxplot prepared={
              lower whisker=0.05454, lower quartile=0.145454,
              median=0.19090,
              upper quartile=0.23636, upper whisker=0.30909,
              draw position=900},
              black, solid
            ] coordinates {};
            \addplot+[boxplot prepared={
              lower whisker=0.10909, lower quartile=0.14545,
              median=0.18181,
              upper quartile=0.21818, upper whisker=0.23636,
              draw position=1200},
              black, solid
            ] coordinates {};
            \addplot+[boxplot prepared={
              lower whisker=0.01818, lower quartile=0.12727,
              median=0.16363,
              upper quartile=0.2, upper whisker=0.30909,
              draw position=1800},
              black, solid
            ] coordinates {};
            \addplot+[red,mark=x,mark options={solid,mark size=4}]
                table[x=time, y=drugdrugmean] {\mydataexamplemean};
            \end{axis}
        \end{tikzpicture}
        }
        \caption{Normalised solution quality versus the runtime (a DrugDrug task, size=700, opt-time=2,440min).}
        \label{fig:gap-time-ex2}
    \end{subfigure}\hfill
    \begin{subfigure}[t]{\graphsize\textwidth}
        \resizebox{\linewidth}{!}{
        \begin{tikzpicture}
            \begin{axis}[
                xlabel={Time (s)},
                ylabel={Normalised gap},
                label style={font=\LARGE},
                xtick={0, 60, 300, 600, 900, 1200, 1800},
                xmin=-40, xmax=1840,
                ymin=0, ymax=1,
                boxplot/draw direction=y,
                boxplot/box extend=-50,
            ]
            \addplot+[boxplot prepared={
              lower whisker=1.0, lower quartile=1.0,
              median=1.0,
              upper quartile=1.0, upper whisker=1.0,
              draw position=0},
              black, solid
            ] coordinates {};
            \addplot+[boxplot prepared={
              lower whisker=0.6875, lower quartile=0.85937,
              median=1.0,
              upper quartile=1.0, upper whisker=1.0,
              draw position=60},
              black, solid
            ] coordinates {};
            \addplot+[boxplot prepared={
              lower whisker=0.15625, lower quartile=0.21875,
              median=0.28125,
              upper quartile=0.35156, upper whisker=0.40625,
              draw position=300},
              black, solid
            ] coordinates {};
            \addplot+[boxplot prepared={
              lower whisker=0.125, lower quartile=0.21875,
              median=0.25,
              upper quartile=0.3125, upper whisker=0.34375,
              draw position=600},
              black, solid
            ] coordinates {};
            \addplot+[boxplot prepared={
              lower whisker=0.125, lower quartile=0.25,
              median=0.3125,
              upper quartile=0.34375, upper whisker=0.375,
              draw position=900},
              black, solid
            ] coordinates {};
            \addplot+[boxplot prepared={
              lower whisker=0.125, lower quartile=0.21875,
              median=0.25,
              upper quartile=0.28906, upper whisker=0.34375,
              draw position=1200},
              black, solid
            ] coordinates {};
            \addplot+[boxplot prepared={
              lower whisker=0.125, lower quartile=0.1875,
              median=0.25,
              upper quartile=0.28125, upper whisker=0.3125,
              draw position=1800},
              black, solid
            ] coordinates {};
            \addplot+[red,mark=x,mark options={solid,mark size=4}]
                table[x=time, y=alzheimermean] {\mydataexamplemean};
            \end{axis}
        \end{tikzpicture}
        }
        \caption{Normalised solution quality versus the runtime (an Alzheimer task, size=700, opt-time=1,685min).}
        \label{fig:gap-time-ex3}
    \end{subfigure}
    
    \caption{Scalability results. In (b) to (d), the data points on the red dashed line denote the average normalised gap of 40 independent runs and the boxes show the distribution of these results.}
    \label{fig:scalability}
\end{figure}
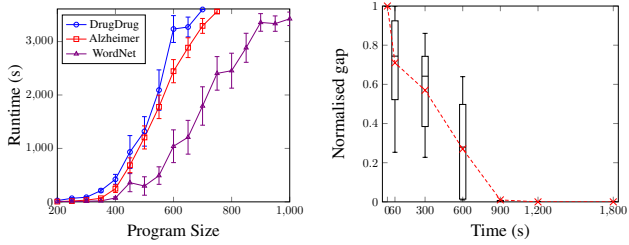
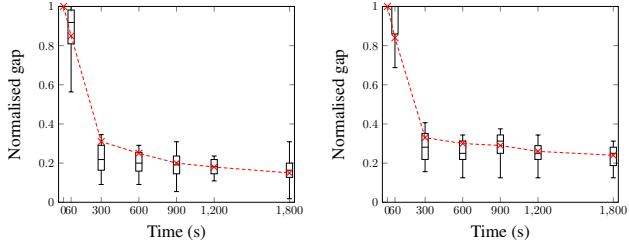

\paragraph{Results}
\figurename~\ref{fig:comp-cpsatv2} compares the compression rates of \name{} and \knorf{} with the same timeout. Both approaches call CP-SAT.
For Lego, \name{} can increase compression rates by at least 27\% across all programs.
For Strings, \name{} finds better refactorings on 172 out of 196 programs, increasing the compression rates by an average of 30\%.
In the remaining 24 programs, both \name{} and \knorf{} fail to find any refactoring in half of them, while in the other half \name{}'s compression rates are on average 10\% lower than \knorf{}'s.

We also use a version of \name{} that uses a MaxSAT solver (See \appendixname~\ref{app:maxsat} for encoding details).
\figurename~\ref{fig:comp-maxsatv2} shows the results when compared to \knorf{}. 
Given the same timeout (600s), \name{} with MaxSAT achieves better refactoring effect compared to CP-SAT.
For Strings, \name{} finds better refactorings than \knorf{} on all the 196 programs, increasing the compression rates by 30\%--60\% with an average of 53\%.

In conclusion, these results suggest that \name{} performs substantially better than \knorf{} at solving the knowledge refactoring problem, thereby answering \textbf{Q1}.

\subsection{Q2: Scalability}

\paragraph{Datasets.} We evaluate the scalability of \name{} on three real-world problems: DrugDrug \cite{dhami2018drug}, Alzheimer \cite{DBLP:journals/ngc/KingSS95}, and WordNet \cite{DBLP:journals/ml/BordesGWB14}.
Each dataset contains millions of rules so we can randomly sample programs of distinct sizes from them and use the sampled programs to test the performance of \name{}.
Specifically, for every size in \{200, 250, 300, \dots, 1000\}, we uniformly sample 10 programs from each dataset and run \name{} (calling MaxSAT) to refactor them.

\paragraph{Results.} \figurename~\ref{fig:time-size} shows the runtime required for \name{} to find an optimal solution and prove optimality for different sizes of input programs. The timeout for a single run is set to 3600s and the runtime is counted as 3600s if the limit is exceeded.
For a program size of 400, the average runtime on all the datasets is less than 500s.
The performance varies across different datasets. 
For DrugDrug and Alzheimer, \name{} can scale to programs with a size of 600 and find the optimal refactorings. 
For WordNet, it can scale to programs with a size of 1000.

{\figurename~\ref{fig:time-size} shows the runtime required for \name{} to solve the optimal solution for different sizes of input programs, i.e. to find an optimal refactoring and, crucially, prove optimality.}
However, in most cases, \name{} can find near-optimal refactorings quickly but takes a while to prove optimality. 
To illustrate this behaviour, we select 3 challenging programs from different datasets and measure the sizes of the best refactored programs found by \name{} under different time limits.
Compared to the runtime to prove the optimal refactorings (opt-time), \name{} can find refactored programs within 20\% of the final size with only 44\% (\figurename~\ref{fig:gap-time-ex1}),  0.2\% (\figurename~\ref{fig:gap-time-ex2}), and 0.3\% (\figurename~\ref{fig:gap-time-ex3}) of the runtime respectively.
For example, in the DrugDrug task (\figurename~\ref{fig:gap-time-ex2}), \name{} finds a refactoring of size 630 within 300s, while it takes 2,440 minutes to prove that the optimal refactoring size is 623.
These results indicate that \name{} in practice quickly finds a near-optimal refactoring but takes time to prove optimality.

In conclusion, these results suggest that \name{} scales well on large and diverse programs from real-world problems, thereby answering \textbf{Q2}.

\section{Conclusion and Limitations}

We have introduced a novel approach to solve the optimal knowledge refactoring problem, which refactors a logic program to reduce its size.
We implemented our ideas in \name{}, which formulates this problem as a COP.
Our experiments show that \name{} drastically reduces the sizes of refactored programs compared to the state-of-the-art approach.
Our results also show that \name{} exhibits scalability on different programs from real-world problems.

\paragraph{Limitations}
We do not support recursive invented rules or including other invented predicates in an invented rule.
Doing so may lead to better refactoring.
Future work should address this limitation.

\bibliography{main}

\begin{thebibliography}{44}
\providecommand{\natexlab}[1]{#1}

\bibitem[{Bacchus, J{\"{a}}rvisalo, and Martins(2021)}]{maxsat}
Bacchus, F.; J{\"{a}}rvisalo, M.; and Martins, R. 2021.
\newblock Maximum Satisfiabiliy.
\newblock In \emph{Handbook of Satisfiability - Second Edition}, 929--991. {IOS} Press.

\bibitem[{Berge(1989)}]{DBLP:books/daglib/0067501}
Berge, C. 1989.
\newblock \emph{Hypergraphs - combinatorics of finite sets}, volume~45.
\newblock North-Holland.

\bibitem[{Bessiere, Coletta, and Petit(2007)}]{bessiere2007learning}
Bessiere, C.; Coletta, R.; and Petit, T. 2007.
\newblock Learning Implied Global Constraints.
\newblock In \emph{{IJCAI} 2007}, 44--49.

\bibitem[{Beth, Jungnickel, and Lenz(1986)}]{designtheory}
Beth, T.; Jungnickel, D.; and Lenz, H. 1986.
\newblock \emph{Design Theory}.
\newblock Cambridge University Press.

\bibitem[{Bonet, Drexler, and Geffner(2024)}]{DBLP:journals/corr/abs-2403-16824}
Bonet, B.; Drexler, D.; and Geffner, H. 2024.
\newblock On Policy Reuse: An Expressive Language for Representing and Executing General Policies that Call Other Policies.
\newblock In \emph{{ICAPS} 2024}, 31--39.

\bibitem[{Bordes et~al.(2014)Bordes, Glorot, Weston, and Bengio}]{DBLP:journals/ml/BordesGWB14}
Bordes, A.; Glorot, X.; Weston, J.; and Bengio, Y. 2014.
\newblock A Semantic Matching Energy Function for Learning with Multi-relational Data - Application to Word-sense Disambiguation.
\newblock \emph{Mach. Learn.}, 94(2): 233--259.

\bibitem[{Bowers et~al.(2023)Bowers, Olausson, Wong, Grand, Tenenbaum, Ellis, and Solar-Lezama}]{stitch}
Bowers, M.; Olausson, T.~X.; Wong, L.; Grand, G.; Tenenbaum, J.~B.; Ellis, K.; and Solar-Lezama, A. 2023.
\newblock Top-Down Synthesis for Library Learning.
\newblock In \emph{POPL 2023}, 1182--1213.

\bibitem[{Cadoli and Mancini(2006)}]{DBLP:journals/ai/CadoliM06}
Cadoli, M.; and Mancini, T. 2006.
\newblock Automated Reformulation of Specifications by Safe Delay of Constraints.
\newblock \emph{Artif. Intell.}, 170(8-9): 779--801.

\bibitem[{Cao et~al.(2023)Cao, Kunkel, Nandi, Willsey, Tatlock, and Polikarpova}]{babble}
Cao, D.; Kunkel, R.; Nandi, C.; Willsey, M.; Tatlock, Z.; and Polikarpova, N. 2023.
\newblock Babble: Learning Better Abstractions with E-graphs and Anti-unification.
\newblock In \emph{POPL 2023}, 396--424.

\bibitem[{Cerna and Cropper(2024)}]{cerna2024generalisation}
Cerna, D.~M.; and Cropper, A. 2024.
\newblock Generalisation through Negation and Predicate Invention.
\newblock In \emph{AAAI 2024}, 10467--10475.

\bibitem[{Charlier et~al.(2017)Charlier, Khong, Lecoutre, and Deville}]{DBLP:conf/ijcai/CharlierKLD17}
Charlier, B.~L.; Khong, M.~T.; Lecoutre, C.; and Deville, Y. 2017.
\newblock Automatic Synthesis of Smart Table Constraints by Abstraction of Table Constraints.
\newblock In \emph{{IJCAI} 2017}, 681--687.

\bibitem[{Charnley, Colton, and Miguel(2006)}]{charnley2006automatic}
Charnley, J.; Colton, S.; and Miguel, I. 2006.
\newblock Automatic Generation of Implied Constraints.
\newblock In \emph{ECAI 2006}, 73--77.

\bibitem[{De~Raedt et~al.(2008)De~Raedt, Kersting, Kimmig, Revoredo, and Toivonen}]{DeRaedtLuc2008CpPp}
De~Raedt, L.; Kersting, K.; Kimmig, A.; Revoredo, K.; and Toivonen, H. 2008.
\newblock Compressing Probabilistic Prolog Programs.
\newblock \emph{Mach. Learn.}, 70(2): 151--168.

\bibitem[{Dhami et~al.(2018)Dhami, Kunapuli, Das, Page, and Natarajan}]{dhami2018drug}
Dhami, D.~S.; Kunapuli, G.; Das, M.; Page, D.; and Natarajan, S. 2018.
\newblock Drug-drug Interaction Discovery: Kernel Learning from Heterogeneous Similarities.
\newblock \emph{Smart Health}, 9: 88--100.

\bibitem[{Drake et~al.(2002)Drake, Frisch, Gent, and Walsh}]{drake2002automatically}
Drake, L.; Frisch, A.; Gent, I.; and Walsh, T. 2002.
\newblock Automatically Reformulating SAT-encoded CSPs.
\newblock In \emph{Workshop on Reformulating Constraint Satisfaction Problems}.

\bibitem[{Duman{\v{c}}i{\'c}, Guns, and Cropper(2021)}]{dumancic2021knowledge}
Duman{\v{c}}i{\'c}, S.; Guns, T.; and Cropper, A. 2021.
\newblock Knowledge Refactoring for Inductive Program Synthesis.
\newblock In \emph{AAAI 2021}, 7271--7278.

\bibitem[{Duman{\v{c}}i{\'c} et~al.(2019)Duman{\v{c}}i{\'c}, Guns, Meert, and Blockeel}]{seb:alps}
Duman{\v{c}}i{\'c}, S.; Guns, T.; Meert, W.; and Blockeel, H. 2019.
\newblock Learning Relational Representations with Auto-encoding Logic Programs.
\newblock In \emph{IJCAI 2019}, 6081--6087.

\bibitem[{E{\'e}n and S{\"o}rensson(2006)}]{een2006translating}
E{\'e}n, N.; and S{\"o}rensson, N. 2006.
\newblock Translating pseudo-{B}oolean constraints into {SAT}.
\newblock \emph{J. Satisf. Boolean Model. Comput.}, 2(1-4): 1--26.

\bibitem[{Ellis et~al.(2018)Ellis, Morales, Sabl{\'{e}}{-}Meyer, Solar{-}Lezama, and Tenenbaum}]{dreamcoder}
Ellis, K.; Morales, L.; Sabl{\'{e}}{-}Meyer, M.; Solar{-}Lezama, A.; and Tenenbaum, J. 2018.
\newblock Learning Libraries of Subroutines for Neurally-Guided Bayesian Program Induction.
\newblock In \emph{NeurIPS 2018}, 7816--7826.

\bibitem[{Fleischner, Sabidussi, and Sarvanov(2010)}]{FLEISCHNER20102742}
Fleischner, H.; Sabidussi, G.; and Sarvanov, V.~I. 2010.
\newblock Maximum Independent Sets in 3- and 4-regular Hamiltonian Graphs.
\newblock \emph{Discret. Math.}, 310(20): 2742--2749.

\bibitem[{Goddard and Henning(2013)}]{GODDARD2013839}
Goddard, W.; and Henning, M.~A. 2013.
\newblock Independent domination in graphs: A survey and recent results.
\newblock \emph{Discret. Math.}, 313(7): 839--854.

\bibitem[{Hebrard(2018)}]{DBLP:conf/ijcai/Hebrard18}
Hebrard, E. 2018.
\newblock Reasoning about NP-complete Constraints.
\newblock In \emph{{IJCAI} 2018}, 5672--5676.

\bibitem[{Heule et~al.(2015)Heule, J{\"{a}}rvisalo, Lonsing, Seidl, and Biere}]{heule2015clause}
Heule, M.; J{\"{a}}rvisalo, M.; Lonsing, F.; Seidl, M.; and Biere, A. 2015.
\newblock Clause Elimination for {SAT} and {QSAT}.
\newblock \emph{J. Artif. Intell. Res.}, 53: 127--168.

\bibitem[{Hocquette, Dumancic, and Cropper(2024)}]{stevie}
Hocquette, C.; Dumancic, S.; and Cropper, A. 2024.
\newblock Learning Logic Programs by Discovering Higher-Order Abstractions.
\newblock In \emph{{IJCAI-24}}, 3421--3429.

\bibitem[{Hocquette and Muggleton(2020)}]{celine:bottom}
Hocquette, C.; and Muggleton, S.~H. 2020.
\newblock Complete Bottom-Up Predicate Invention in Meta-Interpretive Learning.
\newblock In \emph{{IJCAI} 2020}, 2312--2318.

\bibitem[{Jain et~al.(2021)Jain, Gautrais, Kimmig, and Raedt}]{luc:mdl}
Jain, A.; Gautrais, C.; Kimmig, A.; and Raedt, L.~D. 2021.
\newblock Learning {CNF} Theories Using {MDL} and Predicate Invention.
\newblock In \emph{{IJCAI} 2021}, 2599--2605.

\bibitem[{King, Sternberg, and Srinivasan(1995)}]{DBLP:journals/ngc/KingSS95}
King, R.~D.; Sternberg, M. J.~E.; and Srinivasan, A. 1995.
\newblock Relating Chemical Activity to Structure: An Examination of {ILP} Successes.
\newblock \emph{New Gener. Comput.}, 13(3{\&}4): 411--433.

\bibitem[{Kok and Domingos(2007)}]{pedro:pi}
Kok, S.; and Domingos, P.~M. 2007.
\newblock Statistical Predicate Invention.
\newblock In \emph{ICML 2007}, 433--440.

\bibitem[{Kramer(2020)}]{kramer:ijcai20}
Kramer, S. 2020.
\newblock A Brief History of Learning Symbolic Higher-Level Representations from Data (And a Curious Look Forward).
\newblock In \emph{{IJCAI} 2020}, 4868--4876.

\bibitem[{Liberti(2012)}]{DBLP:journals/mp/Liberti12}
Liberti, L. 2012.
\newblock Reformulations in Mathematical Programming: Automatic Symmetry Detection and Exploitation.
\newblock \emph{Math. Program.}, 131(1-2): 273--304.

\bibitem[{Lloyd(2012)}]{lloyd2012foundations}
Lloyd, J.~W. 2012.
\newblock \emph{Foundations of Logic Programming}.
\newblock Springer Science \& Business Media.

\bibitem[{Mears and de~la Banda(2015)}]{DBLP:conf/ijcai/MearsB15}
Mears, C.; and de~la Banda, M.~G. 2015.
\newblock Towards Automatic Dominance Breaking for Constraint Optimization Problems.
\newblock In \emph{{IJCAI} 2015}, 360--366.

\bibitem[{Muggleton, Lin, and Tamaddoni{-}Nezhad(2015)}]{mugg:metagold}
Muggleton, S.~H.; Lin, D.; and Tamaddoni{-}Nezhad, A. 2015.
\newblock Meta-interpretive Learning of Higher-order Dyadic Datalog: Predicate Invention Revisited.
\newblock \emph{Mach. Learn.}, 100(1): 49--73.

\bibitem[{Nienhuys-Cheng and Wolf(1997)}]{ilp:book}
Nienhuys-Cheng, S.-H.; and Wolf, R.~d. 1997.
\newblock \emph{Foundations of Inductive Logic Programming}.
\newblock Springer.

\bibitem[{O'Sullivan(2010)}]{o2010automated}
O'Sullivan, B. 2010.
\newblock Automated Modelling and Solving in Constraint Programming.
\newblock In \emph{AAAI 2010}, 1493--1497.

\bibitem[{Piotr{\'o}w(2020)}]{piotrow2020uwrmaxsat}
Piotr{\'o}w, M. 2020.
\newblock Uwrmaxsat: Efficient Solver for {MaxSAT} and Pseudo-{B}oolean Problems.
\newblock In \emph{ICTAI 2020}, 132--136.

\bibitem[{Plotkin(1971)}]{plotkin:thesis}
Plotkin, G. 1971.
\newblock \emph{Automatic Methods of Inductive Inference}.
\newblock Ph.D. thesis, Edinburgh University.

\bibitem[{Rosenfeld(1964)}]{Rosenfeld1964}
Rosenfeld, M. 1964.
\newblock Independent sets in regular graphs.
\newblock \emph{Israel Journal of Mathematics}, 2(4): 262--272.

\bibitem[{Rossi, Van~Beek, and Walsh(2006)}]{rossi2006handbook}
Rossi, F.; Van~Beek, P.; and Walsh, T. 2006.
\newblock \emph{Handbook of Constraint Programming}.
\newblock Elsevier.

\bibitem[{Rumelhart and Norman(1976)}]{Rumelhart1976AccretionTA}
Rumelhart, D.~E.; and Norman, D.~A. 1976.
\newblock Accretion, Tuning and Restructuring: Three Modes of Learning.

\bibitem[{Shmueli(1987)}]{shmueli1987decidability}
Shmueli, O. 1987.
\newblock Decidability and Expressiveness Aspects of Logic Queries.
\newblock In \emph{PODS 1987}, 237--249.

\bibitem[{Silver et~al.(2023)}]{tom:pi}
Silver, T.; et~al. 2023.
\newblock Predicate Invention for Bilevel Planning.
\newblock In \emph{{AAAI} 2023}, 12120--12129.

\bibitem[{Tamaki and Sato(1984)}]{tamaki1984unfold}
Tamaki, H.; and Sato, T. 1984.
\newblock Unfold/Fold Transformation of Logic Programs.
\newblock In \emph{ICLP 1984}, 127--138.

\bibitem[{Vo(2020)}]{vo2020reformulations}
Vo, H.-P. 2020.
\newblock Reformulations of Constraint Satisfaction Problems: A Survey.

\end{thebibliography}

\newpage
\appendix
\section{Proof of Theorem~\ref{thm:complex}}
\label{app:proofcomplex}
\begin{definition}[\textbf{Independent set}]
Let $G=(V,E)$ be a graph. A set of vertices $S\subseteq V$ is an \emph{independent set} if for all $u,v\in S$, $(u,v)\not \in E$. By $I(G)$ we denote the set of independent sets of $G$.  We refer to $S$ as \emph{maximal} if for all $v\in V\setminus S$, $S\cup \{v\}$ is not an independent set. By $I_m(G)$ we denote the set of maximal independent sets of $G$. The \emph{independence number} of a graph $G$, denoted by $\alpha(G)$=$\max \{|S|\mid S\in I_m(G)\}$. The \emph{independence dominating number} of a graph $G$, denoted by $i(G) =\min\{|S|\mid S\in I_m(G)\}$.
\end{definition}

\begin{definition}[\textbf{$\Delta$-regular}]
A graph $G=(V,E)$ is $\Delta$-regular if for every $v\in V$, ¸$\mathit{degree}(v,G)=\Delta$.
\end{definition}

\noindent
Let MIS$^{3}$ be defined as follows:

\vspace{1em}
\fbox{\begin{minipage}{.42\textwidth}
\begin{center}
    \underline{Maximum Independent Set in 3-regular  Hamiltonian}\\  \underline{Graphs (MIS$^{3}$)}
\end{center}
\begin{flushleft}
\underline{\textbf{Given:}} A 3-regular Hamiltonian graph and an integer $k\geq 0$\\
\underline{\textbf{Problem:}} Is there a maximum independent set in $G$ of size $\geq k$.
\end{flushleft}
\end{minipage}}
\vspace{1em}

\citet{FLEISCHNER20102742} proved that MIS$^{3}$ is $\mathcal{NP}$-complete. 

\vspace{1em}
\fbox{\begin{minipage}{.42\textwidth}
\begin{center}
    \underline{Optimal Propositional Knowledge Refactoring} \\ \underline{without Duplication (OPKR$_{WD}$)}
\end{center}
\begin{flushleft}
\underline{\textbf{Given:}} A propositional definite logic program $P$, and a finite set of invented rules $C$\\
\underline{\textbf{Problem:}}  Find a proper propositional refactored program $P'$ such that for any propositional refactored program $Q$, $size(P') \le size(Q)$, and for $C'\subseteq C$,  where $\textsf{unfold}(P' \setminus C', P' \cap C') = P$, the following condition holds for $P'$:
\begin{itemize}
    \item if there exists $c_1,c_2\in C'$ such that $\mathit{body}(c_1)\cap \mathit{body}(c_2)\not = \emptyset$ then for all $r\in P'$, $\{\mathit{pred}(c_1),\mathit{pred}(c_2)\}\not \subseteq \mathit{body}(r)$.
\end{itemize}
\end{flushleft}
\end{minipage}}
\vspace{1em}

\begin{definition}[\textbf{$G$-induced refactoring problem}]
Let $G=(V,E)$ be a 3-regular Hamiltonian graph. Then the \emph{G-Induced Refactoring Problem}, $\mathit{induced}(G)=(P,C,g,h)$, is defined as follows:
\begin{itemize}
    \item $B= \{b_1,\cdots, b_{|E|}\}$ where for all $1\leq i<j\leq |E|$, $b_i\not = b_j$ 
    \item  $C= \{c_1,\cdots, c_{|E|}\}$ where for all $1\leq i<j\leq |E|$, $c_i\not = c_j$
    \item $A=\{a_1,\cdots, a_{|V|}\}$ where for all $1\leq i<j\leq |V|$, $a_i\not = a_j$
    \item $g:V\mapsto A$ and $h:E\mapsto B$ are bijections
    \item $P=\{P_1,\cdots,P_{|E|}\}$ be a set of  propositional definite rules such that for all $1\leq i\leq |E|$, $\mathit{body}(P_i)=B\cup \{c_i\}$.
    \item $C=\{C_1,\cdots,C_{|V|}\}$ be a set of propositional definite rules such that the following holds:
    \begin{itemize}
        \item for all $c,c'\in C$, $\mathit{pred}(c)\not = \mathit{pred}
(c')$, 
        \item for all $c\in C$, $\mathit{pred}(c)\in A$, and
        \item For all $c\in C$, $\mathit{body}(c)\subseteq B$ and  for all $d\in B$, $d\in \mathit{body}(c)$ iff there exists $v\in V$ such that $h^{-1}(d)=(v,g^{-1}(\mathit{pred}(c))) \in E$.
    \end{itemize}
\end{itemize}
\end{definition}

\begin{lemma}
    OPKR$_{WD}$  is $\mathcal{NP}$-hard. 
\end{lemma}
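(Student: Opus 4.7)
The plan is to give a polynomial-time many-one reduction from MIS$^{3}$, which is $\mathcal{NP}$-complete, to the decision version of OPKR$_{WD}$ using the $G$-induced refactoring construction. Given a $3$-regular Hamiltonian graph $G=(V,E)$ and an integer $k$, the reduction outputs the induced instance $(P,C,g,h)$ together with the threshold $t=|E|(|E|+2)-2k(|E|-2)$; both are polynomially computable from $(G,k)$.

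The structural heart of the argument is that two invented rules $C_{u},C_{v}\in C$ share a body literal iff $(u,v)\in E$, because any common body literal must be $h(e)$ for some edge $e$ incident to both $g^{-1}(pred(C_{u}))$ and $g^{-1}(pred(C_{v}))$. Hence the OPKR$_{WD}$ non-duplication clause forces the set $S_{i}\subseteq V$ of invented rules used to refactor $P_{i}$ to be an independent set of $G$. Since $G$ is $3$-regular, the bodies of the $C_{v}$ for $v\in S_{i}$ are then pairwise disjoint and together cover exactly $3|S_{i}|$ of the $|E|$ elements of $B$ occurring in $P_{i}$. A direct count gives
\[
size(P')=|E|(|E|+2)-2\sum_{i}|S_{i}|+4\,\bigl|\bigcup_{i}S_{i}\bigr|,
\]
so minimising $size(P')$ is equivalent to maximising $\sum_{i}|S_{i}|-2\bigl|\bigcup_{i}S_{i}\bigr|$.

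Writing $U=\bigcup_{i}S_{i}$ and $\alpha_{U}=\alpha(G[U])$, I would use the two bounds $|S_{i}|\le \alpha_{U}$ and $|U|\ge \alpha_{U}$ to conclude that the objective is at most $|E|\alpha_{U}-2\alpha_{U}=(|E|-2)\alpha_{U}\le(|E|-2)\alpha(G)$, and that equality is attained by choosing $U=S^{*}$ for any maximum independent set $S^{*}$ and then taking $S_{i}=S^{*}$ for every $i$. Therefore the minimum refactoring size equals $|E|(|E|+2)-2\alpha(G)(|E|-2)$, and $(G,k)$ is a yes-instance of MIS$^{3}$ iff the induced instance admits a proper refactoring of size at most $t$. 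Since every $3$-regular graph has $|E|\ge 6>2$, the coefficient $(|E|-2)$ is strictly positive, so the reduction is informative, and this gives the desired polynomial reduction establishing $\mathcal{NP}$-hardness of OPKR$_{WD}$.

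The main obstacle will be the optimisation argument: I must rule out improvements obtained either by varying $S_{i}$ across rules, or by letting $U$ properly contain a maximum independent set in order to enable larger per-rule $S_{i}$. The two inequalities above settle this but the formal proof must additionally verify that the refactored program constructed from $S^{*}$ is \emph{proper} in the sense of Definition~\ref{def:rp} (no invented rule becomes redundant) and confirm that no further savings come from the distinguishing body literal $c_{i}$ of each $P_{i}$. This is also where the paper's observation that $G[V\setminus S^{*}]$ has maximum degree at most $2$ — a consequence of $3$-regularity plus the maximality of $S^{*}$ — becomes useful, since maximum weighted independent set is polynomially solvable on graphs of maximum degree at most $2$, which keeps any residual combinatorial bookkeeping over uncovered vertices tractable.
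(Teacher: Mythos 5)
Your proposal is correct and is built on the same reduction as the paper (MIS$^{3}$ via the $G$-induced refactoring instance), but you execute the key optimality step quite differently. The paper reasons about an arbitrary optimal solution $P'$: it observes that the invented rules actually used must have pairwise disjoint bodies (because every rule of $P$ contains all of $B$), extracts the corresponding vertex set, and shows by an exchange argument that it must be a maximal, and then maximum, independent set of $G$; this implicitly treats every rule of $P$ as being refactored by the same set of invented rules. You instead allow each rule $P_i$ to be refactored by its own independent set $S_i$, derive the exact size formula $size(P')=|E|(|E|+2)-2\sum_i|S_i|+4\left|\bigcup_i S_i\right|$, and prove the sharp bound $\sum_i|S_i|-2\left|\bigcup_i S_i\right|\le(|E|-2)\alpha(G)$ via $|S_i|\le\alpha(G[U])\le|U|$, with equality at $S_i=S^*$. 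This buys you two things the paper's argument does not make explicit: a rigorous treatment of heterogeneous per-rule choices of invented rules, and a clean polynomial many-one reduction to the decision version with the explicit threshold $t=|E|(|E|+2)-2k(|E|-2)$, rather than a witness-extraction argument from an optimal solution. The residual items you flag are indeed routine: properness of the witness built from $S^*$ holds because every $C_v$, $v\in S^*$, is used in every refactored rule, and the literals $c_i$ are uncoverable since all invented bodies lie in $B$. One small correction: the observation that $G[V\setminus S^*]$ has maximum degree at most $2$ is not needed for this lemma; the paper invokes it only in the main-text sketch and, implicitly, for the subsequent OPKR (with duplication) lemmas, so you should not rely on it here.
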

\begin{proof}
We  consider a reduction from MIS$^{3}$. Let $G=(V,E)$ be a 3-regular hamiltonian graph, $\mathit{induce}(G)=(P,C,g,h)$ is a $G$-induced refactoring problem, and $P'$ a solution to OPKR$_{WD}$ for $P$ and $C$ where for   $C'\subseteq C$,  $\textsf{unfold}(P' \setminus C', P' \cap C') = P$ . Observe that for all $c_1,c_2\in C'$, $\mathit{body}(c_1)\cap \mathit{body}(c_2)=\emptyset$ because all $p\in P$, $\bigcup_{c\in C} \mathit{body}(c) \subset \mathit{body}(p)$. Hence,  $S=\{g^{-1}(\mathit{pred}(c))\mid c\in C\}$ forms an independent set of $G$, i.e. $S\in I(G)$. Furthermore, $S\in I_m(G)$, i.e. $S$ is maximal, because if $S\not \in I_m(G)$, then there exists a $W\subset V$ such that $S\cup W\in I_m(G)$  and $C'\cup \{g(w)\mid w\in W\}$ would result in a program $P''$ that is a better refactoring of $P$.  Furthermore, $S$ must be maximum, as otherwise, there would exists $S'\in I_m(G)$ such that $|S'|\geq |S|$, and $C''=\{g(w)\mid w\in S'\}$  would result in a program $P''$ that is a better refactoring of $P$.
\end{proof}
\subsection{Optimal Refactoring With Duplication}
Proposition 2.1 of~\citet{GODDARD2013839} presents a lower bound on the independent dominating number (minimum maximal independent set) of graphs with maximum degree $\Delta$:

\begin{proposition}[\cite{DBLP:books/daglib/0067501,GODDARD2013839}]
\label{prop:IDlower}
    For a graph $G$ with $n$ vertices and maximum degree $\Delta$,
    $$\left\lceil \frac{n}{\Delta+1} \right\rceil\leq i(G)\leq n-\Delta$$
\end{proposition}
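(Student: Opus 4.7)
The plan is to prove the two inequalities separately using elementary arguments about closed neighbourhoods and maximal independence. Fix a graph $G=(V,E)$ with $|V|=n$ and maximum degree $\Delta$, and let $S\in I_m(G)$ be an arbitrary maximal independent set. For $v\in V$ write $N[v]=\{v\}\cup N(v)$ for the closed neighbourhood.

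For the lower bound $\lceil n/(\Delta+1)\rceil\leq i(G)$, I would exploit the fact that every maximal independent set is a dominating set. First I would observe that if some $u\in V\setminus S$ had no neighbour in $S$, then $S\cup\{u\}$ would still be independent, contradicting maximality of $S$. Hence $\bigcup_{s\in S}N[s]=V$, and since every vertex has degree at most $\Delta$, we have $|N[s]|\leq \Delta+1$ for each $s\in S$. A union bound then gives $n\leq |S|(\Delta+1)$, so $|S|\geq n/(\Delta+1)$, and taking ceilings yields the claim. Applying this to a minimum maximal independent set (one realising $i(G)$) completes the lower bound.

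For the upper bound $i(G)\leq n-\Delta$, I would exhibit a maximal independent set of size at most $n-\Delta$. Pick any vertex $v\in V$ with $\mathit{degree}(v,G)=\Delta$, which exists by definition of $\Delta$. Starting from $S_0=\{v\}$, greedily extend by repeatedly adding any vertex not adjacent to the current set until no such vertex remains, producing a maximal independent set $S$ with $v\in S$. Independence forces $S\cap N(v)=\emptyset$, so $S\subseteq V\setminus N(v)$, and since $|N(v)|=\Delta$ we get $|S|\leq n-\Delta$. Therefore $i(G)\leq |S|\leq n-\Delta$.

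I do not anticipate any real obstacle: both directions follow directly from the definitions of maximality and maximum degree, and the only mild subtlety is guaranteeing the existence of a vertex of degree exactly $\Delta$ for the upper bound, which is immediate from the definition of $\Delta$ as the maximum degree.
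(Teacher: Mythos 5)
Your proof is correct. Note that the paper does not prove this proposition at all: it is imported as a known result (Proposition~2.1 of Goddard and Henning, going back to Berge), so there is no "paper proof" to compare against. Your two arguments are the standard ones and are sound: for the lower bound, maximality of an independent set $S$ forces it to be dominating, so the closed neighbourhoods $N[s]$, each of size at most $\Delta+1$, cover all $n$ vertices, giving $|S|\geq n/(\Delta+1)$ and hence the ceiling bound after using integrality; for the upper bound, greedily extending a maximum-degree vertex $v$ to a maximal independent set $S$ yields $S\subseteq V\setminus N(v)$ and therefore $i(G)\leq |S|\leq n-\Delta$. Both steps are fully justified, and the only edge case ($\Delta=0$, where the bounds read $n\leq i(G)\leq n$) is handled automatically by your argument.
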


\noindent
Observe that this proposition holds for $\Delta$-regular graphs as they are a special case of graphs with maximum degree $\Delta$. Furthermore, Observation 4.1 presented in~\citet{GODDARD2013839}, puts an  upper bound on the size of the maximum independent set of a regular graph.
\begin{observation}[\cite{GODDARD2013839,Rosenfeld1964}]
\label{obs:upperbound}
   If $G$ is a regular graph on $n$ vertices with no isolated vertex, then $i(G) \leq \alpha(G) \leq \frac{n}{2}$.
\end{observation}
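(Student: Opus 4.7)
The plan is to establish the two inequalities in the chain $i(G)\leq \alpha(G)\leq n/2$ separately, since they require completely different arguments.

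The first inequality $i(G)\leq \alpha(G)$ is immediate from the definitions given at the start of the appendix: $i(G)$ is the minimum and $\alpha(G)$ is the maximum of $\{|S|\mid S\in I_m(G)\}$, both taken over the same nonempty family of maximal independent sets, so the minimum cannot exceed the maximum. No further argument is needed.

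For the second inequality $\alpha(G)\leq n/2$, I would use a double-counting argument on the edges between a maximum independent set and its complement. Let $G$ be $\Delta$-regular with no isolated vertex; regularity combined with this hypothesis forces $\Delta\geq 1$. Fix a maximum independent set $S\subseteq V$, so $|S|=\alpha(G)$. Since $S$ is independent, every edge incident to a vertex of $S$ has its other endpoint in $V\setminus S$. Counting such boundary edges from the $S$ side yields exactly $\sum_{v\in S}\mathit{degree}(v,G)=\Delta|S|$. Counting from the $V\setminus S$ side, each vertex $u\in V\setminus S$ is incident to at most $\mathit{degree}(u,G)=\Delta$ boundary edges, giving an upper bound of $\Delta(n-|S|)$. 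Combining the two bounds and dividing by $\Delta\geq 1$ yields $|S|\leq n-|S|$, hence $\alpha(G)\leq n/2$.

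The argument is classical, so I do not anticipate any serious obstacle. The only subtle step is invoking the no-isolated-vertex hypothesis together with regularity to conclude $\Delta\geq 1$; without this, the edge-count argument degenerates (a graph with $\Delta=0$ has $\alpha(G)=n$, violating the claim) and the cancellation step fails. The bound is tight in general, as any $\Delta$-regular balanced bipartite graph achieves $\alpha(G)=n/2$, confirming that no strengthening is possible at this level of generality.
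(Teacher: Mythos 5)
Your proof is correct: the first inequality follows directly from the paper's definitions of $i(G)$ and $\alpha(G)$ as the minimum and maximum over the (nonempty) family of maximal independent sets, and your edge-counting argument $\Delta|S| \le \Delta(n-|S|)$ with $\Delta \ge 1$ cleanly gives $\alpha(G) \le n/2$. The paper itself imports this observation from \citet{GODDARD2013839} and Rosenfeld without proof, so there is no in-paper argument to compare against; your argument is the standard one and fills that gap correctly, including the right use of the no-isolated-vertex hypothesis.
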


\indent
\vspace{1em}
\fbox{\begin{minipage}{.42\textwidth}
\begin{center}
    \underline{Optimal Propositional Knowledge Refactoring} \\ \underline{with Duplication (OPKR)}
\end{center}
\begin{flushleft}
\underline{\textbf{Given:}} A propositional definite logic program $P$, and a finite set of invented rules $C$\\
\underline{\textbf{Problem:}} Find a proper propositional refactored program $P'$ such that for any propositional refactored program $Q$, $size(P') \le size(Q)$.
\end{flushleft}
\end{minipage}}
\vspace{1em}

\begin{lemma}
\label{lem:subsumed}
Let $G=(V,E)$ be a 3-regular Hamiltonian graph and $\mathit{induce}(G)=(P,C,g,h)$ a $G$-induced refactoring problem. Then every solution $P'$ to OPKR with input $P$ and $C$ where for $C'\subseteq C$, $\textsf{unfold}(P' \setminus C', P' \cap C') = P$, $C'$ is a superset of a set $C''\subseteq C$ where $P''$ is solution to OPKR$_{WD}$ with input $P$ and $C$ and $\textsf{unfold}(P'' \setminus C'', P'' \cap C'') = P$. 
\end{lemma}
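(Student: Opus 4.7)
Let $(P', C')$ be an OPKR-optimal solution and let $V' = \{g^{-1}(\mathit{pred}(c)) : c \in C'\} \subseteq V$ denote the corresponding set of vertices. The plan has three parts: first derive local optimality conditions on $V'$; then extract a large independent subset $I^* \subseteq V'$ and prove $|I^*| = \alpha(G)$; finally read off an OPKR$_{WD}$ solution from $I^*$.

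For the first part, I would write the OPKR objective of the $G$-induced instance as an explicit linear function of $k = |V'|$ and $a$, the number of edges inside $G[V']$, then compute the marginal cost of adding or removing a single vertex. This yields two necessary optimality conditions: (i) $G[V']$ has maximum degree at most $1$, so $V'$ decomposes as a matching $M'$ of $|M'| = a$ edges plus a set $I'_0$ of $k - 2a$ isolated vertices of $G[V']$; and (ii) every $v \in V \setminus V'$ has at least $2$ neighbors in $V'$, which combined with $3$-regularity implies $G[V \setminus V']$ also has maximum degree at most $1$. A vertex $u \in V'$ violating (i) could be removed to strictly improve the objective, and a vertex $v \notin V'$ violating (ii) could be added.

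For the second part, I define $I^* \subseteq V'$ by taking one endpoint from each edge of $M'$ together with every isolated vertex of $G[V']$, giving an independent set of $G$ with $|I^*| = k - a$. The central claim is $|I^*| = \alpha(G)$. My plan is to argue by contradiction: if $|I^*| < \alpha(G)$, I would construct a strictly smaller OPKR refactoring based on a maximum independent set $J$ of $G$, contradicting optimality of $(P', C')$. The key inputs are the edge-counting identities $3k = 2a + b$ and $3(|V| - k) = 2a'' + b$, where $b$ counts edges between $V'$ and $V \setminus V'$ and $a''$ is the matching number of $G[V \setminus V']$, together with the decomposition $G = C_H \cup M$ of a $3$-regular Hamiltonian graph into a Hamiltonian cycle and a perfect matching. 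This Hamiltonian decomposition constrains how $M'$ can sit inside $C_H$ and $M$ and ultimately forces $|I^*| \geq \alpha(G)$; the reverse inequality is immediate since $I^*$ is independent in $G$.

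To conclude, set $C'' = \{c \in C : g^{-1}(\mathit{pred}(c)) \in I^*\} \subseteq C'$ and let $P''$ refactor $P$ by using each rule of $C''$ uniformly in every refactored clause. Since $|I^*| = \alpha(G)$ and the $G$-induced instance is symmetric across its rules, the previous lemma (which characterises OPKR$_{WD}$ optima as refactorings using a maximum independent set of $G$) gives that $P''$ is a solution to OPKR$_{WD}$ and $\textsf{unfold}(P'' \setminus C'', P'' \cap C'') = P$. The principal obstacle is proving $|I^*| = \alpha(G)$ in the second part; the matching-counting is delicate, and Hamiltonicity is genuinely needed — in a non-Hamiltonian $3$-regular graph such as the Petersen graph, one can verify that an OPKR-optimal $V'$ may fail to contain any maximum independent set of $G$.
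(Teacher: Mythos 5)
Your setup is sound and your local-optimality analysis is correct: translating the OPKR objective into the vertex set $V'=\{g^{-1}(\mathit{pred}(c)) : c\in C'\}$, the marginal computations do show that in an optimal solution $G[V']$ has maximum degree at most $1$ (removing a vertex of degree $\ge 2$ in $G[V']$ saves $4+(d-2)|E|>0$) and that every outside vertex has at least two neighbours in $V'$. This matches the paper's observations (A)--(D) and its overlap lemma. The problem is that you stop exactly at the crux. The entire content of the lemma is the claim that the optimal $V'$ contains a \emph{maximum} independent set of $G$ (your $|I^*|\ge\alpha(G)$), and for that you offer only a declared intention --- ``I would construct a strictly smaller OPKR refactoring,'' ``the matching-counting is delicate,'' ``Hamiltonicity \dots ultimately forces $|I^*|\ge\alpha(G)$'' --- with no actual argument. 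A proof plan whose hardest step is labelled ``the principal obstacle'' and left unexecuted is not a proof; this is a genuine gap.

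Your proposed route to close the gap also diverges from the paper's in a way you should reconsider. The paper never uses the decomposition $G=C_H\cup M$ of a $3$-regular Hamiltonian graph into a Hamiltonian cycle plus a perfect matching. Instead it bounds the independent ``core'' from below by $\lceil |V|/4\rceil$ (Proposition~\ref{prop:IDlower}), bounds the remaining overlap-$1$ part from above via $\alpha(G')\le |V'|/2$ on the residual graph (Observation~\ref{obs:upperbound}), and then compares marginal gains: each additional independent-core rule saves $2|E|-4$ while each sacrificed overlap-$1$ rule costs only $|E|-4$, so enlarging the core by one unit nets at least $|E|>0$; hence the core must be a maximum independent set. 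Hamiltonicity enters only to make the source problem MIS$^3$ $\mathcal{NP}$-complete, not as a structural tool in this lemma. Your closing assertion that ``Hamiltonicity is genuinely needed'' because the Petersen graph gives a counterexample is unsubstantiated and, if it were true, would contradict the paper's Hamiltonicity-free argument --- so either verify it carefully or drop it. To complete your proof, replace the appeal to $C_H\cup M$ with the quantitative exchange argument above (or carry out your edge-counting identities $3k=2a+b$ to the point where they actually yield the inequality $k-a\ge\alpha(G)$).
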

\begin{proof}
To simplify the proof we will refer to solutions of OPKR$_{WD}$ and OPKR by the subset of $C$ which is needed to unfold the resulting program to the input program. First let us make the following four observations:
\begin{itemize}
    \item[A)]  for any $c_1,c_2\in C$, where $c_1\not = c_2$ and $\mathit{body}(c_1)\cap\mathit{body}(c_2)=\emptyset$, the refactoring of $P$ using $\{c_1,c_2\}$ results in a program $P'$ such that $\mathit{size}(P')=   \mathit{size}(P)- (4|E|-8)$. 
    \item[B)]  for any $c_1,c_2\in C$, where $c_1\not = c_2$ and $|\mathit{body}(c_1)\cap\mathit{body}(c_2)|=1$ the refactoring of $P$ using $\{c_1,c_2\}$ results in a program $P'$ such that $\mathit{size}(P')=   \mathit{size}(P)- (3|E|-8)$.
    \item[C)]  for any $c_1,c_2\in C$, where $c_1\not = c_2$ and $|\mathit{body}(c_1)\cap\mathit{body}(c_2)|=2$ the refactoring of $P$ using $\{c_1,c_2\}$ results in a program $P'$ such that $\mathit{size}(P')=   \mathit{size}(P)- (2|E|-8)$.
    \item[D)]  for any mutually distinct $c_1,c_2,c_3\in C$, where  $|\mathit{body}(c_1)\cap\mathit{body}(c_2)|=1$, $|\mathit{body}(c_1)\cap\mathit{body}(c_3)|=1$, and $\mathit{body}(c_2)\cap\mathit{body}(c_3)=\emptyset$, the refactoring of $P$ using $\{c_1,c_2,c_3\}$ results in a program $P'$ such that $\mathit{size}(P')=   \mathit{size}(P)- (4|E|-12)$.
    
\end{itemize}
These observations imply that a solution $S'$ to OPKR should contain the maximal number of rules in $C$ which have no overlap, i.e. a maximal independent set of $G$. By Proposition~\ref{prop:IDlower}, the subset $S$ of $S'$ corresponding to a maximal independent set of $G$ must have size at least $\left\lceil \frac{|V|}{4} \right\rceil$. 

Now, let $G'=(V',E')$ be the graph derived from $G$ by removing all vertices and edges associated with rules of $S$
and all isolated vertices, i.e. isolated vertices represent rules of $C$ whose body literals are covered by other rules (observation (D)). Then by Observation~\ref{obs:upperbound}, the maximum independent set of $G'$ is bounded from above by $$ \left\lceil\frac{|V|-\left\lceil \frac{|V|}{4} \right\rceil}{2}\right\rceil = \left\lceil\frac{3|V|}{8}\right\rceil.$$ 
While the maximum independent set of $G'$ does not correspond to an optimal refactoring, it upper bounds the size of the optimal refactoring.

Observe that the size of  $|S|=\left\lceil \frac{|V|}{4} +k\right\rceil$ for some $0\leq k\leq |V|-\frac{|V|}{4}$. Thus, 
$$|S'\setminus S| \leq  
\left\lceil\frac{|V|-\left\lceil \frac{|V|+4k}{4} \right\rceil}{2}\right\rceil = \left\lceil\frac{3|V|-4k}{8}\right\rceil.
$$

Furthermore taking the difference of the bound for $k$ and $k+1$ results in the following

$$
\left\lceil\frac{3|V|-4k}{8}\right\rceil-\left\lceil\frac{3|V|-4(k+1)}{8}\right\rceil= \left\lceil\frac{1}{2}\right\rceil = 1.
$$
 
Thus, increasing $k$ by 1 may result in a decrease in the maximum size of $S'\setminus S$ by $1$. Increasing $k$ by 2 will result in a decrease in the maximum size of $S'\setminus S$ by $1$. By the observations highlighted above, this implies a decrease in the size of $P$ by at least $2|E|-4-(|E|-4)= |E|$ when $k$ is increased by 1 and a decrease in the size of $P$ by $4|E|-8-(|E|-4)= 3|E|-4$ when $k$ is increased by 2. Thus, increasing $k$ always results in a better refactoring of $P$. Hence, $S'$ is optimal when it contains a maximum independent set, i.e. $S$ has to be associated with a  maximum independent set of $G$. 
\end{proof}

\begin{lemma}
\label{lem:overlap}
Let $G=(V,E)$ be a 3-regular Hamiltonian graph, $\mathit{induce}(G)=(P,C,g,h)$ a $G$-induced refactoring problem, and $P'$ a solution to OPKR for input $P$ and $C$ where for $S\subseteq C$, $\textsf{unfold}(P' \setminus S, P' \cap S) = P$ . Then for every rule $c\in S$, $|\{b\mid b\in \mathit{body}(c)\cap\mathit{body}(c')\wedge c'\in S \wedge c\not = c'\}|\leq 1$.
\end{lemma}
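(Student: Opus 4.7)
The lemma asserts that in any optimal refactoring $P'$ of a $G$-induced problem, each invented rule $c\in S$ shares body literals with at most one other rule in $S$. Because $G$ is a simple graph, two distinct rules in $C$ share at most one body literal, and they share one iff their corresponding vertices are adjacent. So the claim translates to the graph-theoretic statement: every vertex corresponding to a rule in $S$ has at most one neighbor in $S$.

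The plan is to argue by contradiction: assume some $c\in S$ has $k\geq 2$ neighbors in $S$, and exhibit a strictly smaller valid refactoring that uses $S\setminus\{c\}$. First I would pin down the shape of the optimum. Because every $P_i$ has body $B\cup\{c_i\}$ and every $c'\in C$ has $\mathit{body}(c')\subseteq B$, the reduction in size obtained by invoking a subset $T\subseteq S$ inside the refactoring of $P_i$ depends only on $T$ through the quantity $f(T)=|E(T)|-|T|$, where $E(T)$ denotes the set of edges of $G$ incident to $T$. Hence the best choice of invocations is the same for every $i$, call it $T^*$, and properness forces $S=T^*$ (any rule not in some $T_i$ could be dropped from $S$). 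Using the 3-regularity of $G$ to write $|E(T)|=3|T|-|E_{TT}|$, the total size of the refactored program is
\[
|E|(|E|+2)+(4-2|E|)\,|S|+|E|\,|E_{SS}|,
\]
where $|E_{SS}|$ counts edges with both endpoints in $S$.

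Next I would compute the effect of deleting $c$ from $S$. Setting $S'=S\setminus\{c\}$ gives $|S'|=|S|-1$ and $|E_{S'S'}|=|E_{SS}|-k$, so the change in the above expression is
\[
\Delta = (2-k)\,|E|-4.
\]
Under the hypothesis $k\geq 2$ this gives $\Delta\leq -4<0$, so the candidate refactoring built from $S'$ (with $T_i'=T_i\setminus\{c\}$ for every $i$) is strictly smaller than $P'$. I would briefly check that this candidate is still a valid refactoring: the up-to-one body literal of $c$ that was not shared with another rule of $S$ simply reappears uncovered in each $P_i'$, and the rest stays covered by $c$'s neighbors in $S$. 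If the resulting program is not proper I can shrink it further by iteratively dropping unused invented rules, which only decreases the size. This contradicts the optimality of $P'$.

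The main obstacle I expect is justifying cleanly that the optimal choice of invocations is the same across all $P_i$ and that $S=T^*$ for proper refactorings, since that is what lets me treat the refactoring size as a clean function of $(|S|,|E_{SS}|)$. Once that reduction is in hand, the accounting that yields $\Delta=(2-k)|E|-4$ is a direct count and the contradiction is immediate. The result extends to $k=3$ trivially with the even stronger saving $\Delta=-|E|-4$.
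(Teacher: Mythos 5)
Your proof is correct and takes essentially the same route as the paper's: the paper's one-line argument invokes observations (A)--(D) from the proof of Lemma~2 (observation (D) is exactly your $k=2$ case, a loss of $4$) and, like you, concludes that removing the over-connected rule $c$ yields a strictly smaller refactoring, contradicting optimality. Your explicit size formula $|E|(|E|+2)+(4-2|E|)|S|+|E|\,|E_{SS}|$ and the resulting $\Delta=(2-k)|E|-4$ just make the accounting more transparent, and the uniformity step you flag does go through: in any optimum every $T_i$ must maximise $f$ over subsets of $S$, and any invented rule lying outside a common maximiser costs $4$ for no gain, so the size of any optimum is indeed the stated function of $(|S|,|E_{SS}|)$.
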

\begin{proof}
    By the observations presented in the proof of Lemma~\ref{lem:subsumed}, if $\{b\mid b\in \mathit{body}(c)\cap\mathit{body}(c')\wedge c'\in S \wedge c\not = c'\}>1$, then including $c$ actually reduces the quality of the refactoring implying that $S$ is not optimal.
\end{proof}

\begin{lemma}
    OPKR is $\mathcal{NP}$-hard. 
\end{lemma}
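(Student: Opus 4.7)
The proof plan is to reduce MIS$^3$ to OPKR by reusing the $G$-induced refactoring construction employed in the OPKR$_{WD}$ hardness proof, and then exploiting Lemma~\ref{lem:subsumed} and Lemma~\ref{lem:overlap} to extract a maximum independent set from any optimal OPKR refactoring in polynomial time. Concretely, given a 3-regular Hamiltonian graph $G=(V,E)$ and a threshold $k$, I would build the $G$-induced refactoring problem $\mathit{induce}(G)=(P,C,g,h)$, feed $(P,C)$ to an OPKR oracle to obtain an optimal refactoring $P'$ with witness set $S\subseteq C$ (i.e.\ $\textsf{unfold}(P'\setminus S,P'\cap S)=P$), and then answer ``yes'' iff the subset of $S$ corresponding to a maximum independent set has size $\geq k$.

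The first step is to use Lemma~\ref{lem:subsumed} to show that $S$ contains a subset $S^\star\subseteq S$ such that $\{g^{-1}(\mathit{pred}(c))\mid c\in S^\star\}$ is a maximum independent set of $G$. Concretely, from $S$ I would remove every rule $c$ such that $\mathit{body}(c)$ overlaps some other rule in $S$; by Lemma~\ref{lem:overlap} each rule in $S$ has overlap at most $1$ with any other rule, so this pruning is local and polynomial. The key point, already embedded in Lemma~\ref{lem:subsumed}, is that the remaining pairwise-disjoint core of $S$ must be associated with a maximum independent set of $G$, otherwise replacing the non-disjoint part with a larger independent set would strictly improve the refactoring size (via observations (A)--(D) of that proof).

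The second step is to extract the independent set from $S^\star$ via the bijection $g^{-1}\circ \mathit{pred}$ and compare its cardinality with $k$. Since $\mathit{induce}(G)$ and all decoding are computable in polynomial time in $|V|+|E|$, and one oracle call to OPKR suffices, this yields a Karp reduction $\mathrm{MIS}^{3}\leq_{p}\mathrm{OPKR}$. Together with $\mathcal{NP}$-completeness of MIS$^3$~\cite{FLEISCHNER20102742}, this gives $\mathcal{NP}$-hardness of OPKR.

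The main obstacle is to rigorously argue that allowing \emph{duplication} (the feature distinguishing OPKR from OPKR$_{WD}$) does not let an optimal refactoring hide the maximum independent set behind a qualitatively different structure. This is precisely what Lemma~\ref{lem:subsumed} and Lemma~\ref{lem:overlap} are designed to deliver: the former establishes that an optimal $S$ must contain a subset realising a maximum independent set, while the latter bounds pairwise overlaps so that the ``independent'' core of $S$ can be recovered by a trivial polynomial pruning rather than requiring any further combinatorial search. Once both lemmas are invoked, the remainder of the argument is a routine verification that the reduction and decoding preserve the threshold $k$ in both directions.
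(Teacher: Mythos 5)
Your overall architecture matches the paper's: the same $G$-induced instances, with Lemma~\ref{lem:subsumed} and Lemma~\ref{lem:overlap} doing the heavy lifting. The paper phrases the result as a reduction OPKR$_{WD}\leq$ OPKR and lets the earlier lemma carry the connection to MIS$^{3}$, whereas you compose the two reductions into a single MIS$^{3}\leq$ OPKR reduction; that difference is cosmetic.

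The genuine problem is your decoding step. You propose to recover a maximum independent set by deleting from the witness set $S$ every rule whose body overlaps some other rule of $S$. But Lemma~\ref{lem:subsumed} only guarantees that $S$ \emph{contains} a subset $S^\star$ associated with a maximum independent set; it does not say that $S^\star$ equals the overlap-free part of $S$. Indeed, by Lemma~\ref{lem:overlap} every rule in $S\setminus S^\star$ shares exactly one body literal with some rule of the core $S^\star$ (this is precisely the situation of observation (D) in the proof of Lemma~\ref{lem:subsumed}, which is why duplication can make such extra rules profitable, so $S\setminus S^\star$ is generally nonempty). Your pruning rule deletes \emph{both} members of each overlapping pair, so it also deletes those touched core rules, and the surviving pairwise-disjoint set can be a strictly smaller independent set than $\alpha(G)$; the test ``size $\geq k$'' can then return a false negative. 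The paper avoids this by noting that the overlapping rules pair up through shared edge-literals, partitioning them as $S^*=S_1\cup S_2$ with one member of each pair on each side, and discarding only one half, which restores disjointness without losing a maximum-independent-set witness. You need this one-per-pair deletion (or an equivalent numerical argument relating $\mathit{size}(P')$ directly to $\alpha(G)$) to make the threshold comparison sound in both directions.
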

\begin{proof}
   We can reduce OPKR$_{WD}$ to OPKR. Let $G=(V,E)$ be a 3-regular Hamiltonian graph, $\mathit{induce}(G)=(P,C,g,h)$ a $G$-induced refactoring problem, and $P'$ an OPKR solution for input P and C where for $S'\subseteq C$, $\textsf{unfold}(P' \setminus S', P' \cap S') = P$.  By Lemma~\ref{lem:subsumed}, we know $S'$ contains a subset $S$ such that for a solution $P''$ of  OPKR$_{WD}$, $\textsf{unfold}(P'' \setminus S, P' \cap S) = P$ and by Lemma~\ref{lem:overlap}, for any rule $c\in S'\setminus S$, $|\{b\mid b\in \mathit{body}(c)\cap\mathit{body}(c')\wedge c'\in S\}|= 1$. Now let 
   $S^* = \{ c\mid  |\mathit{body}(c)\cap\mathit{body}(c')|=1\wedge c,c'\in S \wedge c'\not = c\}$. Observe that $S^*$ has an even number of members. Thus, we can define $S^*=S_1\cup S_2$ such that $S_1\cap S_2=\emptyset$ and $|S_1|=|S_2|$. A solution to OPKR$_{WD}$ is $Q$ such that $\textsf{unfold}(P' \setminus (S\setminus S_1), P' \cap (S\setminus S_1)) = P$  (or  $\textsf{unfold}(P' \setminus (S\setminus S_2), P' \cap (S\setminus S_2)) = P$).
\end{proof}
\noindent
\figurename~\ref{fig:encodingex} shows an example OPKR encoding.

\indent
\vspace{0em}

\fbox{\begin{minipage}{.42\textwidth}
\begin{center}
    \underline{Optimal Knowledge Refactoring (OKR)}
\end{center}
\begin{flushleft}
\underline{\textbf{Given:}} A definite logic program $P$, and a finite set of invented rules $C$\\
\underline{\textbf{Problem:}} Find a refactored program $\mathcal{Q}$ such that for any refactored program $\mathcal{Q'}$, $size(\mathcal{Q}) \le size(\mathcal{Q'})$
\end{flushleft}
\end{minipage}}
\vspace{1em}
\begin{theorem}
    OKR is $\mathcal{NP}$-hard.
\end{theorem}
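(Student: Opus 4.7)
The plan is to reduce OPKR to OKR in polynomial time and then invoke the $\mathcal{NP}$-hardness of OPKR established in the preceding lemma. Since a propositional definite logic program is just a definite logic program whose predicate symbols all have arity zero, any OPKR instance $(P,C)$ is syntactically an OKR instance under the identity map, so the reduction itself is trivially polynomial (in fact constant) in size.

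The substantive step is to verify that the two optimisation problems agree on propositional inputs, i.e.\ that an optimal OKR refactoring of a propositional $P$ is exactly an optimal OPKR refactoring. First, I would observe that Definition~\ref{def:aux} forces the body predicates of any invented rule to come from $Pr(\mathcal{P})$; when $\mathcal{P}$ is propositional, every such predicate is $0$-ary, the set $\mathcal{X}^{\mathit{arity}(p)}$ collapses to a single element, and the head variable set $\bigcup_{a \in body(r)} var(a)$ is empty. Hence every invented rule that OKR can consider on a propositional input is itself propositional. Second, rule unfolding (Definition~1) on propositional rules reduces to the plain propositional resolvent with no substitution, which coincides with the notion of unfolding implicit in the OPKR problem statement. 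Third, the size measure $\mathit{size}(r)=|body(r)|+1$ is the same quantity used by OPKR.

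Putting these three facts together, the set of OKR refactored programs of a propositional $P$ equals the set of OPKR refactored programs of $P$, and the size function agrees pointwise, so the optima coincide. Any polynomial-time algorithm for OKR therefore decides OPKR in polynomial time. Combined with the $\mathcal{NP}$-hardness of OPKR proved in the previous subsection, this yields $\mathcal{NP}$-hardness of OKR.

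The main obstacle is purely the bookkeeping of step two above: ensuring that the more permissive first-order definitions of invented and refactored rule do not quietly admit extra refactorings on propositional input that could strictly improve the objective. This worry is dispelled by the collapse of the variable domain noted in the previous paragraph, which forces every first-order object the OKR definition allows on a propositional input to be itself propositional.
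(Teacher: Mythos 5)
Your argument is essentially identical to the paper's, which simply observes that OPKR is a subproblem of OKR and invokes the $\mathcal{NP}$-hardness of OPKR from the preceding lemma; you merely spell out in more detail why the propositional case embeds faithfully (arity-zero predicates, collapsed variable domains, coinciding size measure). The extra verification is correct and harmless, but the underlying reduction is the same one-line specialisation argument.
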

\begin{proof}
    Observe that OPKR is a subproblem of OKR.
\end{proof}

\begin{figure*}[t]
     \begin{subfigure}[b]{0.5\textwidth}
        \centering
\scalebox{.8}{
   \begin{tikzpicture}
\begin{scope}[every node/.style={circle,thick,draw}]
    \node[shape=circle,draw=black] (A) at (-.5,-5.5) {$a_1$};
    \node[shape=circle,draw=black] (B) at (1.5,-2.5) {$a_2$};
    \node[shape=circle,draw=black] (C) at (1,-4.25) {$a_3$};
    \node[shape=circle,draw=black] (D) at (-2,-4) {$a_4$};
    \node[shape=circle,draw=black] (E) at (3.75,-7.5) {$a_5$};
    \node[shape=circle,draw=black] (F) at (-1.25,-7.5) {$a_6$} ;
    \node[shape=circle,draw=black] (G) at (4.5,-4) {$a_7$} ;
    \node[shape=circle,draw=black] (H) at (2.25,-5.5) {$a_8$} ;
\end{scope}
\begin{scope}[>={Stealth[black]},
              every node/.style={fill=white,circle},
              every edge/.style={draw=red,very thick}]   
    \path[] [-] (A) edge node {$b_1$} (E);
    \path [-](A) edge node {$b_2$} (F);
    \path [-](A) edge node {$b_3$} (H);
    \path [-](B) edge node {$b_4$} (C);
    \path [-](B) edge node {$b_5$} (D);
    \path [-](B) edge node {$b_6$} (G);
    \path [-](C) edge node {$b_7$} (D);
    \path [-](C) edge node {$b_8$} (H);
    \path [-](D) edge node {$b_9$} (F);  
    \path[dotted] [-](E) edge node {$b_{10}$} (F);
    \path [-](E) edge node {$b_{11}$} (G);   
    \path [-](G) edge node {$b_{12}$} (H);   
\end{scope}
\end{tikzpicture}
}
    \caption{A 3-regular Hamiltonian graph.}
    \label{fig:enter-label}
\end{subfigure}
\begin{subfigure}[b]{0.5\textwidth}
\begin{align*}
    \mathcal{P}= &\left\lbrace \begin{array}{c}
    p\leftarrow b_1,\cdots,b_{12},c_1\\
    \vdots\\
    p\leftarrow b_1,\cdots,b_{12},c_{12}\\
\end{array}\right\rbrace\\
    {S}=& \left\lbrace \begin{array}{ll}
    a_1\leftarrow b_1,b_2,b_3 & a_2\leftarrow b_4,b_5,b_6\\
    a_3\leftarrow b_4,b_7,b_8 & a_4\leftarrow b_5,b_7,b_9\\
    a_5\leftarrow b_1,b_{10},b_{11} & a_6\leftarrow b_2,b_9,b_{10}\\
    a_7\leftarrow b_6,b_{11},b_{12} & a_8\leftarrow b_3,b_8,b_{12}\\
    \end{array}\right\rbrace\\
    \text{OPKR}= &\left\lbrace \begin{array}{cc}    a_1\leftarrow b_1,b_2,b_3 & a_3\leftarrow b_4,b_7,b_8\\
    a_4\leftarrow b_5,b_7,b_9 &  a_7\leftarrow b_6,b_{11},b_{12}\\
    \end{array}\right\rbrace
\end{align*}

     \caption{Encoding of the Graph in Figure \ref{fig:enter-label}. }
    \label{fig:encodingGraph}
\end{subfigure}
 \caption{An example OPKR encoding. Dotted edge is not covered by the OPKR.}
     \label{fig:encodingex}
\end{figure*}

\section{Proof of Theorem~\ref{thm:optref}}
\label{app:tightbound}

 Let us consider an invented rule $r \in {S}$ which minimally reduces the size of logic program $\mathcal{P}$ through refactoring. $r$ has the following properties: (i) $\mathit{body}(r)= 2$, (ii) there exists $R \subseteq \mathcal{P}$ such that $|R|= 4$, (iii) refactoring $R$ by $r$ results in a programs $R'$ such that $\mathit{size(R')}\leq \mathit{size(R)}$. We refer to such $r \in {S}$ as $\mathcal{P}$-\textit{minimal}. Note, if $|R|< 4$, then refactoring by $r \in {S}$  ($\mathit{body}(r)= 2$) does not reduce the size of $R$.

 If refactoring of $\mathcal{P}$ using  $S' \subset {S}$, where all $r \in S'$ are $\mathcal{P}$-minimal, results in a program $\mathcal{P}'$ where for every $c' \in \mathcal{P}'$, $\mathit{body}(c')\subseteq \{ \mathit{pred}(r) \mid r \in S'\}$, then each rule $c^*\in \mathcal{P}$ is refactored by at most $\left\lceil \frac{s-1}{2}\right\rceil$ $\mathcal{P}$-minimal invented rules. In the worst case, each invented rule in $S'$ will refactor precisely 4 rules of $\mathcal{P}$, thus we will need $K = \left\lceil \frac{n}{4}\right\rceil\left\lceil \frac{s-1}{2}\right\rceil$.

If we refactor by invented rules which are not $\mathcal{P}$-minimal, but individually reduce the size of $\mathcal{P}$, fewer invented rules are needed to refactor $\mathcal{P}$ into a program $\mathcal{P}'$ with the outlined properties because (i)  $\leq \left\lceil \frac{s-1}{m}\right\rceil$ invented rules with body size $m\geq 2$ are needed to cover the rules in $\mathcal{P}$ and (ii) larger invented rules need to occur $2+\max\{0,4-m\}$ times in $\mathcal{P}$ to reduce the size of $\mathcal{P}$. Observe that $\left\lceil \frac{n}{2+\max\{0,4-m\}}\right\rceil\left\lceil \frac{s-1}{m}\right\rceil\leq \left\lceil \frac{n}{4}\right\rceil\left\lceil \frac{s-1}{2}\right\rceil$ for $m\geq 2$ and thus the right side is an upper bound.

Note that the bound is tight as there exists programs whose optimal refactoring requires precisely $\left\lceil \frac{n}{4}\right\rceil\left\lceil \frac{s-1}{2}\right\rceil$ invented rules.
Consider the following propositional program:
\[
\mathcal{P} = 
\left\{
\begin{array}{l}
    p\leftarrow  q_1,q_2,q_5,q_6,q_7,q_8 \\
    p\leftarrow  q_1,q_2,q_3,q_4,q_5,q_6\\
    p\leftarrow  q_1,q_2,q_3,q_4,q_7,q_8 \\
    p\leftarrow  q_{11},q_{12},q_{15},q_{16},q_{17},q_{18} \\
    p\leftarrow  q_{13},q_{14},q_{15},q_{16},q_{17},q_{18} \\
    p\leftarrow  q_{11},q_{12},q_{13},q_{14},q_{17},q_{18} \\
    p\leftarrow  q_{11},q_{12},q_{13},q_{14},q_{15},q_{16} \\
    p\leftarrow  q_{9},q_{10},q_{11},q_{12},q_{17},q_{18} \\
    p\leftarrow  q_{9},q_{10},q_{13},q_{14},q_{15},q_{16} \\
    p\leftarrow  q_{3},q_{4},q_{7},q_{8},q_{9},q_{10} \\
    p\leftarrow  q_{1},q_{2},q_{5},q_{6},q_{9},q_{10} \\
    p\leftarrow  q_{3},q_{4},q_{5},q_{6},q_{7},q_{8} \\
\end{array}
\right\}
\]
The considered set of invented rules is as follows: 
\[
R = 
\left\{
\begin{array}{ll}
    p_1\leftarrow q_1,q_2 &  p_2\leftarrow  q_3,q_4 \\
    p_3\leftarrow q_5,q_6 &  p_4\leftarrow  q_7,q_8 \\
    p_5\leftarrow q_9,q_{10} &  p_6\leftarrow  q_{11},q_{12} \\
    p_7\leftarrow q_{13},q_{14} &  p_8\leftarrow  q_{15},q_{16} \\
    p_9\leftarrow q_{17},q_{18} & 
\end{array}
\right\}
\]
Observe that 
\begin{equation*}
        \left\lceil \frac{n}{4}\right\rceil \left\lceil \frac{\mathit{size}(c)-1}{2}\right\rceil =  \left\lceil \frac{12}{4}\right\rceil \left\lceil \frac{6}{2}\right\rceil = 9.
\end{equation*}

\noindent
The refactored program is as follows:
\[
\mathcal{Q} = 
\left\{
\begin{array}{ll}
    p\leftarrow p_1,p_3,p_4 & p\leftarrow p_1,p_2,p_3 \\
    p\leftarrow p_1,p_2,p_4 & p\leftarrow p_{6},p_{8},p_{9} \\
    p\leftarrow p_{7},p_{8},p_{9} & p\leftarrow p_{6},p_{7},p_{9} \\
    p\leftarrow p_{6},p_{7},p_{8} & p\leftarrow p_{5},p_{6},p_{9} \\
    p\leftarrow p_{5},p_{7},p_{8} & p\leftarrow p_{2},p_{4},p_{5} \\
    p\leftarrow p_{1},p_{3},p_{5} & p\leftarrow p_{2},p_{3},p_{4} \\
    p_1\leftarrow q_1,q_2 & p_2\leftarrow q_3,q_4 \\
    p_3\leftarrow q_5,q_6 &  p_4\leftarrow q_7,q_8 \\
    p_5\leftarrow q_9,q_{10} &  p_6\leftarrow q_{11},q_{12} \\
    p_7\leftarrow q_{13},q_{14} &  p_8\leftarrow q_{15},q_{16} \\
    p_9\leftarrow q_{17},q_{18} & 
\end{array}
\right\}
\]
The refactoring by $R$ is optimal as any literal in the body of a rule in $\mathcal{P}$ is only contained in the added invented rules of $\mathcal{Q}$. Furthermore, no literal of $\mathcal{P}$ is used more than once by an invented rule.
Even though the refactoring is large with respect to the input program, it only reduced the size of $\mathcal{P}$ by $9$.
Observe that this program is a Balanced Incomplete Block Design (BIBD)~\cite{designtheory}, in particular a (9,12,4,3,1)-design. 

\section{MaxSAT Encoding of \name{}}
\label{app:maxsat}
We describe the MaxSAT encoding of \name{}. Note that it is equivalent to the COP encoding presented in Section~\ref{sec:encoding}. We provide details on implementing the formulation in the Boolean domain.

\paragraph{Decision Variables}
For each possible invented rule $r_k$ for $k \in [1,K]$ and $p \in Pr(\mathcal{P})$, we use a Boolean variable $\mathit{r}_{k,p}^m$ to indicate there are $m$ literals with predicate symbol $p$ in the body of $r_k$.
The domain of $m$ is defined by the maximum number of times $p$ appears in an input rule.
Moreover, the definitions of $\mathit{use}_{c,k}^t$, $\mathit{cover}_{c,a,k}^{t}$, $\mathit{used}_{k}$, and $\mathit{covered}_{c,a}$ are the same as in Section~\ref{sec:encoding}.

\paragraph{Constraints}
First, the variables $r_{k,p}^m$ should be mutually exclusive for different $m$. The constraint is set as follows:
\begin{equation*}
    \forall k,p,m_1,m_2\,(m_1 < m_2),\; {r}_{k,p}^{m_1} \rightarrow \neg {r}_{k,p}^{m_2}.
\end{equation*}
Second, we add the same constraint to ensure the relation between variables $\mathit{use}_{c,k}^t$:
\begin{equation*}
    \forall c,k,t\,(t>1),\; \mathit{use}_{c,k}^t \rightarrow \mathit{use}_{c,k}^{t-1}.
\end{equation*}

\noindent
The other constraints are slightly modified due to changes in the definition of variable $r_{k,p}^m$.
Constraint~(\ref{eq:cons5}) is replaced as follows:
\begin{equation*}
    \forall c, p, k, t, m\,(p {\rm \;not\;in\;} c),\; \neg (use_{c,k}^t \wedge r_{k,p}^m).
\end{equation*}

\noindent
Moreover, Constraint~(\ref{eq:cons6}) is replaced as follows:
\begin{equation*}
    \forall c,a,k,t\,(e(a,p)),\; {cover}_{c,a,k}^{t} \rightarrow \left( use_{c,k}^t \wedge \exists m,\; r_{k,p}^m \right).
\end{equation*}
For Constraint~(\ref{eq:cons7}), we use a modified pseudo-Boolean encoding:
\begin{equation*}
    \forall c,p,k,t,m,\; r_{k,p}^m \rightarrow \left(\sum_{a \in body(c) \wedge e(a,p)} {cover}_{c,a,k}^{t} \right) \le m.
\end{equation*}
Specifically, we utilise a BDD-like encoding \cite{een2006translating} to transform the pseudo-Boolean inequation into MaxSAT constraints.

Finally, the constraints to calculate $\mathit{used}_k$ and $\mathit{covered}_{c,a}$ remain unchanged:
\begin{equation*}
    \forall k,\; \left( \mathit{used}_k \leftrightarrow \exists c,t,\; \mathit{use}_{c,k}^t \right).
\end{equation*}
\begin{equation*}
    \forall c,a,\; \left( \mathit{covered}_{c,a} \leftrightarrow \exists k,t,\; \mathit{cover}_{c,a,k}^{t} \right). 
\end{equation*}

\subsubsection{Objective}
The objective is slightly modified to minimise the following function:
\begin{equation*}
    \label{eq:obj2}
    \sum_{k}{used_{k}} + \sum_{k,p,m}{m \cdot r_{k,p}^m} + \sum_{c,k,t}{use_{c,k}^t} - \sum_{c,a}{covered_{c,a}}.
\end{equation*}
To clarify, this objective can be transformed into unit soft constraints in MaxSAT: (i) Each $\neg used_{k}$ has a weight of 1; (ii) Each $\neg r_{k,p}^m$ has a weight of $m$; (iii) Each $\neg use_{c,k}^t$ has a weight of 1; (iv) Each ${covered}_{c,a}$ has a weight of 1.

\end{document}